\providecommand{\customgenericname}{}
\newcommand{\newcustomtheorem}[2]{%
  \newenvironment{#1}[1]
  {%
   \renewcommand\customgenericname{#2}%
   \renewcommand\theinnercustomgeneric{##1}%
   \innercustomgeneric
  }
  {\endinnercustomgeneric}
}
\newcounter{theonew} 
\newcounter{theonew3}
\newtheorem{theo}[theonew]{Theorem}
\newtheorem{constr}[theonew3]{Construction}
\def\qed{{\hfill\hbox{\rlap{$\sqcap$}$\sqcup$}}}
\newcommand{\keywords}[1]{\par\addvspace\baselineskip
\noindent\keywordname\enspace\ignorespaces#1}
\begin{document}

\mainmatter  
\title{The graph tessellation cover number: extremal bounds, efficient algorithms and hardness\thanks{This work was partially supported by the Brazilian agencies CAPES, CNPq and FAPERJ.}}

\titlerunning{The graph tessellation cover number}

\author{A.~Abreu\inst{1}
\and L.~Cunha\inst{2}\and T.~Fernandes\inst{3, 4} \and C.~de Figueiredo\inst{1}\\ L.~Kowada\inst{2}\and F.~Marquezino\inst{1}\and D.~Posner\inst{1}\and R.~Portugal\inst{4}}
\authorrunning{A. Abreu et al.}
\institute{Universidade Federal do Rio de Janeiro -- \email{\{santiago, celina, franklin, posner\}@cos.ufrj.br} \and Universidade Federal Fluminense -- \email{\{lfignacio, luis\}@ic.uff.br} \and Universidade Federal do Esp\'{\i}rito Santo \and Laborat\'orio Nacional de Computa\c{c}\~ao Cient\'{\i}fica -- \email{\{tharsodf, portugal\}@lncc.br}}

\toctitle{Lecture Notes in Computer Science}
\tocauthor{Alexandre S. de Abreu, Lu\'is Felipe I. Cunha, Tharso D. Fernandes, Celina M.H. de Figueiredo, Luis Antonio B. Kowada, Franklin de L. Marquezino, Daniel F.D. Posner, Renato Portugal}
\maketitle

\begin{abstract}
A tessellation of a graph is a partition of its vertices into vertex disjoint cliques. A tessellation cover of a graph is a set of tessellations that covers all of its edges. The $t$-\textsc{tessellability} problem aims to decide whether there is a tessellation cover of the graph with $t$ tessellations.
This problem is motivated by its applications to quantum walk models, in especial, the evolution operator of the staggered model is obtained from a graph tessellation cover. 
We establish upper bounds on the tessellation cover number given by the minimum between the chromatic index of the graph and the chromatic number of its clique graph and we show graph classes for which these bounds are tight. 
We prove $\mathcal{NP}$-completeness for \textsc{$t$-tessellability} if the instance is restricted to planar graphs, chordal $(2,1)$-graphs, $(1,2)$-graphs, diamond-free graphs with diameter five, or for any fixed $t$ at least $3$. 
On the other hand, we improve the complexity for \textsc{$2$-tessellability} to a linear-time algorithm.

\keywords{staggered quantum walk, clique graph, tessellation}
\end{abstract}

\section{Introduction}
Random walks play an important role in Computer Science mainly in the area of algorithms and it is expected that quantum walks, which is the quantum counterpart of random walks, will play at least a similar role in Quantum Computation. 
In fact, the interest in quantum walks has grown considerably in the last decades, especially because they can be used to build quantum algorithms that outperform their classical counterparts~\cite{Ven12}.

Recently, the staggered quantum walk model~\cite{PSFG16} was proposed. 
This model is defined by an evolution operator, which is described by a product of local unitary matrices obtained from a \emph{graph tessellation cover}. A tessellation is a partition of the vertices of a graph into vertex disjoint cliques, and a tessellation cover is a set of tessellations so that the union covers the edge set. In order to fully understand the possibilities of the staggered model, it is fundamental to introduce the \textsc{$t$-tessellability} problem. This problem aims to decide whether a given graph can be covered by $t$ tessellations.

The simplest evolution operators are the product of few local unitary matrices, and at least two matrices (corresponding to $2$-tessellable graphs) are required. 
There is a recipe to build a local unitary matrix based on a tessellation~\cite{PSFG16}. 
Each clique of the partition establishes a neighborhood around which the walker can move under the action of the local unitary matrix. 
To define the evolution operator of the quantum walk, one has to include extra tessellations until the tessellation union covers the edge set.
Fig.~\ref{fig:evolution} depicts an example of how a quantum walker could spread across the vertices of a graph, given a particular tessellation cover. Bold vertices represent non-zero amplitudes, meaning that a measurement of the position can reveal the walker at one of those vertices. Note that after each step the walker spreads across the cliques in the corresponding tessellation.

\begin{figure}
\centering
     \includegraphics[scale=0.3]{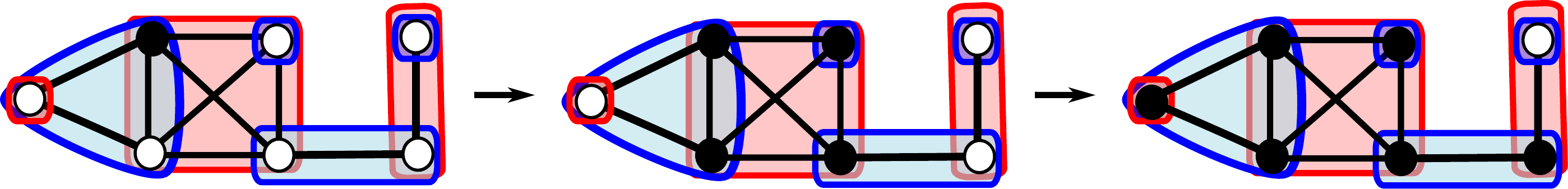}
     \caption{The spreading of a walker subject to locality across a $2$-tessellable graph. At each step, the walker may be observed at bold vertices.\label{fig:evolution}}
\end{figure}


The study of tessellations in the context of Quantum Computing was proposed by Portugal et al.~\cite{PSFG16} with the goal of obtaining the dynamics of quantum walks. Portugal analyzed the $2$-tessellable case in~\cite{Por16b} and we described examples for the $t$-tessellable case in~\cite{MatContemp}. The present work is the first systematic attempt to study the tessellation problem as a branch of Graph Theory.
Our aim is the study of graph classes with extremal 
tessellation cover numbers, efficient algorithms, and hardness.

We describe upper bounds in Section \ref{sec:two}. 
We establish graph classes for which these bounds are tight in Section~\ref{sect:exposed}. 
We use these graphs with extremal tessellation covers to establish hardness results for several graph classes in Section~\ref{sec:four}. 
We obtain proofs of \textsc{$t$-tessellability} $\mathcal{NP}$-completeness for planar graphs, chordal $(2,1)$-graphs, $(1,2)$-graphs, diamond-free graphs with diameter five, or if $t$ is fixed for $t \geq 3$. 
Moreover, we describe a linear-time algorithm for \textsc{$2$-tessellability}. 

\section{Preliminaries on the tessellation cover number}
\label{sec:two}
A \emph{clique} is a subset of vertices of a graph such that its induced subgraph is complete. 
The size of a maximum clique of a graph $G$ is denoted by $\omega(G)$. The \emph{clique graph} $K(G)$ is the intersection graph of the maximal cliques of~$G$. 
A \emph{partition of the vertices of a graph into cliques} is a collection of vertex disjoint cliques, where the union of these cliques is the vertex set. 
Clique graphs play a central role in tessellation covers. 
(See~\cite{Szw03} for an extensive survey on clique graphs and~\cite{BkWest} for omitted graph theory terminologies).

\begin{definition}
\emph{A \emph{tessellation} ${\mathcal{T}}$ is a partition of the vertices of a graph into cliques. 
An edge \emph{belongs} to the tessellation ${\mathcal{T}}$ if and only if its endpoints belong to the same clique in ${\mathcal{T}}$. 
The set of edges belonging to ${\mathcal{T}}$ is denoted by ${\mathcal{E}}({\mathcal{T}})$.}
\end{definition}

\begin{definition}
\emph{Given a graph $G$ with edge set $E(G)$, a \emph{tessellation cover} of size~$t$ of $G$ is a set of $t$~tessellations ${\mathcal{T}}_1,...,{\mathcal{T}}_t$, whose union $\cup_{i=1}^t\,{\mathcal{E}}({\mathcal{T}}_i)=E(G)$.
A graph~$G$ is called \emph{$t$-tessellable} if there is a tessellation cover of size at most~$t$.
The \textsc{$t$-tessellability problem} aims to decide whether a graph $G$ is $t$-tessellable.
The \emph{tessellation cover number} $T(G)$ is the size of a smallest tessellation cover of~$G$. }
\end{definition}

A \emph{coloring} (resp. an \emph{edge-coloring}) of a graph is a labeling of the vertices (resp. edges) with colors such that no two adjacent vertices (resp. incident edges) have the same color. 
A $k$-\emph{colorable} (resp. $k$\emph{-edge-colorable}) graph is the one which admits a \emph{coloring} (resp. an \emph{edge-coloring}) with at most $k$ colors. 
The \emph{chromatic number} $\chi(G)$ (resp. \emph{chromatic index} $\chi'(G)$) of a graph $G$ is the smallest number of colors needed to color the vertices (resp. edges) of $G$.
 
Note that an edge-coloring of a graph $G$ induces a tessellation cover of $G$. 
Each color class induces a partition of the vertex set into disjoint cliques of size two (vertices incident to edges of that color) and cliques of size one (vertices not incident to edges of that color), which forms a tessellation. 
Moreover, a coloring of $K(G)$ induces a tessellation cover of $G$. 
As presented in~\cite{MatContemp}, two vertices of the same color in $K(G)$ correspond to disjoint maximal cliques of $G$ and every edge of $G$ is in at least one maximal clique. 
So, each color in $K(G)$ defines a tessellation in $G$ by possibly adding cliques of size one (vertices that do not belong to maximal cliques of $G$, which are related to vertices of $K(G)$ with that color), such that the union of these tessellations is the edge set of $G$.
Hence, we have the following upper bounds.

\begin{theo} \label{prop:upper}
If $G$ is a graph, then $T(G) \le \min\left\{\chi'(G),\chi(K(G))\right\}$.
\end{theo}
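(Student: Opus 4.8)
The plan is to establish the two bounds $T(G)\le\chi'(G)$ and $T(G)\le\chi(K(G))$ separately, in each case by turning the relevant coloring into an explicit tessellation cover of the same size; the stated inequality then follows immediately by taking the minimum of the two.

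For the first bound I would start from a proper edge-coloring of $G$ with $\chi'(G)$ colors, so that each color class is a matching $M_i$ of $G$. The plan is to promote $M_i$ to a tessellation ${\mathcal{T}}_i$ by letting the two endpoints of each edge of $M_i$ form a clique of size two and letting every vertex of $G$ not saturated by $M_i$ form a clique of size one. Because $M_i$ is a matching these cliques are pairwise vertex disjoint, and together they cover $V(G)$, so ${\mathcal{T}}_i$ is a tessellation with ${\mathcal{E}}({\mathcal{T}}_i)=M_i$. Since the color classes partition $E(G)$, the union $\bigcup_{i}{\mathcal{E}}({\mathcal{T}}_i)$ equals $E(G)$, so ${\mathcal{T}}_1,\dots,{\mathcal{T}}_{\chi'(G)}$ is a tessellation cover and $T(G)\le\chi'(G)$.

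For the second bound I would start from a proper coloring of the clique graph $K(G)$ with $\chi(K(G))$ colors. The vertices of $K(G)$ are the maximal cliques of $G$, and two of them are adjacent in $K(G)$ precisely when the corresponding maximal cliques of $G$ intersect; hence each color class corresponds to a family $\{Q_1,\dots,Q_r\}$ of pairwise vertex-disjoint maximal cliques of $G$. The plan is to obtain a tessellation from such a family by keeping $Q_1,\dots,Q_r$ and adjoining a clique of size one for every vertex of $G$ lying in none of them, which yields a partition of $V(G)$ into cliques. The observation that makes the union cover $E(G)$ is the elementary fact that every edge of $G$ lies in some maximal clique of $G$: that maximal clique carries some color, and so the edge belongs to the tessellation associated with that color. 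Therefore the tessellations indexed by the colors of $K(G)$ form a tessellation cover, giving $T(G)\le\chi(K(G))$.

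I do not expect a real obstacle: both constructions are direct. The only steps needing a little care are checking that the padded families genuinely are partitions of $V(G)$ into cliques — which relies on the vertex-disjointness of a matching, respectively of equally colored maximal cliques — and recording that every edge of $G$ is contained in a maximal clique, the fact that guarantees the clique-graph coloring leaves no edge uncovered.
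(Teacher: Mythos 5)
Your proposal is correct and matches the paper's own argument: the paper likewise converts each edge-color class into a tessellation of size-two and size-one cliques, and each color class of $K(G)$ into a tessellation of pairwise disjoint maximal cliques padded with singletons, using the fact that every edge lies in some maximal clique. No gaps.
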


Portugal~\cite{Por16b} characterized the $2$-tessellable graphs as those whose clique graphs are bipartite graphs. 
In order to characterize $t$-tessellable graphs, for $t\geq~3$, we were able to find graph classes such that $T(G) =3$, with $\chi'(G)$ and $\chi(K(G))$ arbitrarily large, and additionally graph classes with extremal values of Theorem~\ref{prop:upper}, i.e. $T(G) = \chi'(G)$ but $\chi(K(G))$ 
arbitrarily large; and $T(G) = \chi(K(G))$ but $\chi'(G)$ arbitrarily large, some of those examples were described in~\cite{MatContemp}.

An interesting case occurs for a triangle-free graph. 
Note that any of its tessellations can only be formed by cliques of size two or one.
Hence, we have an extremal result that if $G$ is a triangle-free graph, then $T(G) = \chi'(G)$.


A graph is $(k, \ell)$ if its vertex set can be partitioned into $k$ stable sets and $\ell$~cliques.
Particularly, $(2,0)$-graphs are \textit{bipartite graphs}. 
It is known how to $\Delta$-edge-color bipartite graphs~\cite{ArGarey} and to $\Delta$-edge-color $\{\textrm{triangle}, \textrm{proper major}\}$-free graphs~\cite{ArGuedes} in polynomial time. 
Therefore,  not only \textsc{$t$-tessellability} is polynomial time solvable for bipartite graphs and for $\{\textrm{triangle}, \textrm{proper major}\}$-free graphs, but there are also polynomial time procedures to obtain a minimum tessellation cover for these graph classes.
Besides that, it is known that \mbox{\textsc{$3$-edge colorability}} of triangle-free graphs is $\mathcal{NP}$-complete~\cite{Kor97}.
Therefore, \textsc{$3$-tessellability} of triangle-free graphs is also $\mathcal{NP}$-complete.


\section{Extremal tessellation covers}
\label{sect:exposed}


Throughout the paper, an extremal tessellation cover of a graph is one of its tessellation covers whose size reaches an upper bound of 
Theorem~\ref{prop:upper}. 
We are particularly interested in constructing graphs for which these extremal tessellation covers correspond (or are close) to an optimal edge coloring or an optimal vertex coloring of the clique graph of the original graph.
For the sake of convenience, we may omit one-vertex cliques inside tessellations in our proofs.


\begin{constr} \label{fact:addstar}
\emph{Let $H$ be obtained from a graph $G$ by adding a star with $\chi'(G)$ leaves and identifying one of these leaves with a minimum degree vertex of $G$.}
\end{constr}


The tessellation cover number of $H$, obtained from Construction~\ref{fact:addstar} on a non-regular graph $G$, is equal to its original chromatic index, i.e., $T(H) = \chi'(H) = \chi'(G)$.
For regular graphs, if $\chi'(G) = \Delta+1$, then $T(H) = \chi'(H)=\chi'(G)$. 
Otherwise, $T(H) = \chi'(H)=\chi'(G)+1$.
Construction~\ref{fact:addstar} also implies that every non-regular graph $G$ is subgraph of a graph~$H$ with $T(H) = \chi'(H)=\chi'(G)$.

Similarly, there exists a construction in diamond-free graphs, which are the $\{K_4 \backslash \{e\}\}$-free graphs, to force the tessellation cover number to be equal to the original chromatic number of the clique graph. 
First, we define a property of the cliques on a tessellation called exposed maximal clique. 
Such a property helps us with particular cases of diamond-free graphs.


\begin{definition}
\emph{A maximal clique $K$ of a graph $G$ is said \emph{exposed} by a tessellation cover ${\mathcal{C}}$ if $E(K) \not\subseteq \mathcal{E}(\mathcal{T})$ for all $\mathcal{T} \in \mathcal{C}$, that is, the edges of $K$ are covered by no tessellation of $\mathcal{C}$.}
\end{definition}


\begin{lemma}\label{prop:exposedk}
A graph $G$ admits a minimum tessellation cover with no exposed maximal cliques if and only if  $T(G) = \chi(K(G))$.
\end{lemma}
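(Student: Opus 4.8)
The plan is to prove both implications by passing through the correspondence, recalled just before Theorem~\ref{prop:upper}, between proper vertex colorings of $K(G)$ and tessellation covers of $G$. Suppose first that $T(G)=\chi(K(G))$ and take an optimal coloring of $K(G)$ with $\chi(K(G))$ colors. For each color $c$, form the tessellation $\mathcal{T}_c$ consisting of the maximal cliques of $G$ whose $K(G)$-vertex is colored $c$, together with one-vertex cliques on the remaining vertices; these maximal cliques are pairwise disjoint because same-colored vertices of $K(G)$ are non-adjacent and thus correspond to disjoint maximal cliques. Since every edge of $G$ lies in some maximal clique, $\{\mathcal{T}_c\}_c$ is a tessellation cover of $G$; it has size $\chi(K(G))=T(G)$, hence it is minimum; and every maximal clique $K$, being colored by some $c$, satisfies $K\in\mathcal{T}_c$ and therefore $E(K)\subseteq\mathcal{E}(\mathcal{T}_c)$, so no maximal clique is exposed.

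Conversely, let $\mathcal{C}=\{\mathcal{T}_1,\dots,\mathcal{T}_t\}$ be a minimum tessellation cover with $t=T(G)$ and no exposed maximal clique. The key fact I would establish first is that whenever a maximal clique $K$ of $G$ satisfies $E(K)\subseteq\mathcal{E}(\mathcal{T}_i)$, the set $K$ is exactly one of the parts of the partition $\mathcal{T}_i$: any two vertices of $K$ are joined by an edge of $K$ and hence lie in a common part of $\mathcal{T}_i$, so all of $K$ is contained in a single part $C\in\mathcal{T}_i$, and the maximality of $K$ forces $K=C$. Using this, I would color each vertex of $K(G)$, associated to a maximal clique $K$, by an index $i$ for which $K\in\mathcal{T}_i$; such an index exists precisely because $K$ is not exposed, so at most $t$ colors are used. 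This coloring is proper, since if $K\neq K'$ are maximal cliques whose $K(G)$-vertices are adjacent then $K\cap K'\neq\emptyset$, so $K$ and $K'$ cannot both be parts of the same partition $\mathcal{T}_i$ into pairwise disjoint cliques, and hence they receive distinct colors. Thus $\chi(K(G))\le t=T(G)$, which together with $T(G)\le\chi(K(G))$ from Theorem~\ref{prop:upper} gives $T(G)=\chi(K(G))$.

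The only step requiring a bit of care is the key fact above, namely that covering all edges of a maximal clique $K$ by a single tessellation $\mathcal{T}_i$ forces $K$ itself to be one of the parts of $\mathcal{T}_i$; one should also check it behaves correctly for a maximal clique that is a single isolated vertex, which has no edges, is vacuously non-exposed, appears as a singleton part in every tessellation, and corresponds to an isolated vertex of $K(G)$, hence may be colored arbitrarily. Everything else is a routine unwinding of the definitions of tessellation, tessellation cover, and clique graph.
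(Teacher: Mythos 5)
Your proof is correct and uses exactly the correspondence between proper colorings of $K(G)$ and tessellation covers that the paper sets up just before Theorem~\ref{prop:upper} (the paper omits its own proof of this lemma, but this is the intended argument). The key observation you flag --- that a maximal clique all of whose edges lie in one tessellation must itself be a part of that tessellation --- together with the disjointness of same-colored maximal cliques, is precisely what makes both directions go through.
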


In the remaining part of this section we consider diamond-free graphs, which have the following properties~\cite{Pet03}: (1) their clique-graphs are diamond-free, and (2) any two maximal cliques intersect in at most one vertex.


\begin{theo}\label{prop:omega=chi}
If $G$ is a diamond-free graph with $\chi(K(G))=\omega(K(G))$, then $T(G)=\chi(K(G)).$
\end{theo}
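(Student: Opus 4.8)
\medskip\noindent\textbf{Proof proposal.}
The plan is to sandwich $T(G)$ between $\omega(K(G))$ and $\chi(K(G))$. The upper bound $T(G)\le\chi(K(G))$ is already supplied by Theorem~\ref{prop:upper}, so the whole task is the matching lower bound $T(G)\ge\omega(K(G))$, which I would establish for \emph{every} diamond-free graph (with at least one edge); the hypothesis $\chi(K(G))=\omega(K(G))$ then forces $T(G)=\chi(K(G))$, and, through Lemma~\ref{prop:exposedk}, also yields a minimum tessellation cover with no exposed maximal cliques.

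To prove $T(G)\ge\omega(K(G))$, I would fix a maximum clique $\{Q_1,\dots,Q_r\}$ of $K(G)$, so that $r=\omega(K(G))$ and the $Q_i$ are pairwise intersecting maximal cliques of $G$. The first ingredient is the known clique-Helly property of diamond-free graphs~\cite{Pet03}: every family of pairwise intersecting maximal cliques has a common vertex; the crux of this is already visible for three cliques, where an empty common intersection would make the three pairwise-intersection vertices distinct and mutually adjacent, and then chasing the adjacencies forced by diamond-freeness (each new vertex must be adjacent to the others lest an induced diamond appear) would inflate one of the cliques beyond finiteness. Hence $\bigcap_{i=1}^{r}Q_i$ contains a vertex $v$. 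I would also use two elementary consequences of diamond-freeness: by property~(2) (any two maximal cliques meet in at most one vertex), a clique of $G$ that meets a maximal clique $Q$ in at least two vertices is contained in $Q$ --- extend it to a maximal clique, which by property~(2) equals $Q$; and if $r\ge2$ then no $Q_i$ is a singleton (a singleton maximal clique is an isolated vertex, which meets no other maximal clique), so each $Q_i$ has a vertex $w_i\neq v$.

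With this in place, I would fix a minimum tessellation cover $\mathcal{C}=\{\mathcal{T}_1,\dots,\mathcal{T}_t\}$, $t=T(G)$, let $C_j$ be the unique clique of $\mathcal{T}_j$ containing $v$, and then argue: (a) for each $j$ at most one $i$ satisfies $|C_j\cap Q_i|\ge2$, because $|C_j\cap Q_i|\ge2$ and $|C_j\cap Q_{i'}|\ge2$ would give $C_j\subseteq Q_i\cap Q_{i'}$, a set of size at most one by property~(2), contradicting $|C_j|\ge2$; and (b) for each $i$ at least one $j$ satisfies $|C_j\cap Q_i|\ge2$, because the edge $vw_i\in E(Q_i)$ lies in $\mathcal{E}(\mathcal{T}_j)$ for some $j$, so $v$ and $w_i$ lie in a common clique of $\mathcal{T}_j$, which must be $C_j$, giving $\{v,w_i\}\subseteq C_j\cap Q_i$. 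By (a) and (b), the map sending $i$ to the unique $j$ with $|C_j\cap Q_i|\ge2$ is an injection $\{1,\dots,r\}\to\{1,\dots,t\}$, so $\omega(K(G))=r\le t=T(G)$. Combined with Theorem~\ref{prop:upper} and the hypothesis, $\omega(K(G))\le T(G)\le\chi(K(G))=\omega(K(G))$, i.e.\ $T(G)=\chi(K(G))$.

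The step I expect to be the real obstacle is securing the common vertex $v$ --- that is, the clique-Helly property of diamond-free graphs --- both in finding a clean reference and, if a self-contained proof is required, in writing the inductive diamond-chasing argument carefully. Once $v$ is fixed, observations (a) and (b) are short. The only other care needed is for degenerate inputs, namely the edgeless graph (where $T$ must be read by convention) and the case $r=1$ (where $\omega(K(G))=1\le T(G)$ as soon as $G$ has an edge), both disposed of directly.
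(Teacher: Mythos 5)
Your proof is correct and takes essentially the route the paper sets up for this theorem: the upper bound $T(G)\le\chi(K(G))$ from Theorem~\ref{prop:upper}, plus the lower bound $T(G)\ge\omega(K(G))$ obtained by locating a common vertex $v$ of the $\omega(K(G))$ pairwise-intersecting maximal cliques (the clique-Helly property of diamond-free graphs together with property~(2)) and observing that distinct tessellations are needed for the edges at $v$ in distinct cliques. One cosmetic remark: the Helly step is cleaner than your ``inflate beyond finiteness'' sketch suggests --- if three distinct pairwise-intersecting maximal cliques had no common vertex, the three pairwise-intersection vertices would form a triangle whose extension to a maximal clique meets each $Q_i$ in two vertices and hence, by property~(2), equals all three of them, a contradiction.
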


A graph is \textit{$K$-perfect} if its clique graph is perfect~\cite{kperfect}. 
Since a diamond-free $K$-perfect graph $G$ satisfies the premises of Theorem~\ref{prop:omega=chi}, we have $T(G)=\chi(K(G))$.
Note that the size of the clique graph of a diamond-free graph is polynomially bounded by the size of the original graph. 
Moreover, there is a polynomial-time algorithm to obtain an optimal coloring of $K(G)$ with $\omega(K(G))$ colors~\cite{kperfectcol} and, by Theorem~\ref{prop:upper}, a coloring of $K(G)$ with $t$ colors yields that $G$ is $t$-tessellable. 
Thus, both the tessellation cover number and a minimum tessellation cover of diamond-free $K$-perfect graphs are obtained in polynomial time.

Interestingly, there are diamond-free graphs whose clique graphs have chromatic number greater than the tessellation cover number. 
Fig.~\ref{fig:myci} illustrates an example of a $3$-tessellable diamond-free graph whose clique graph has chromatic number~$4$ (the clique graph $K(G)$ is the Gr\"otzsch graph, i.e. Mycielskian of a $5$-cycle graph). 
Note that any minimum tessellation cover of this graph necessarily has an exposed maximal clique. 
Moreover, this graph shows that the upper bound of Theorem~\ref{prop:less4} is tight.

\begin{figure}
\centering
  \includegraphics[height=2.5cm]{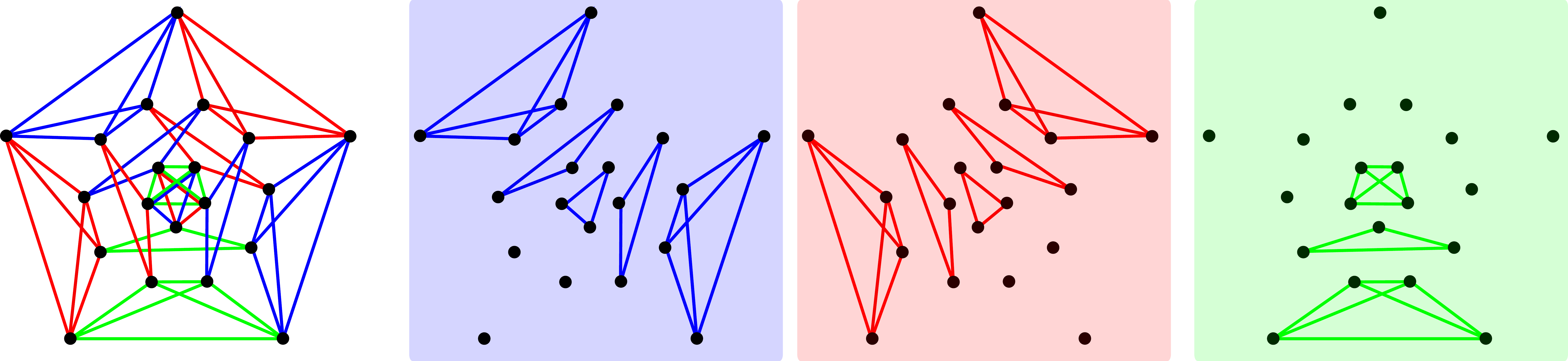}
  \caption{Example of a 3-tessellable graph $G$ whose clique graph is the Mycielskian of a $C_5$, with $\chi(K(G))=4$ but $T(G) = 3$. Each tessellation is depicted separately.\label{fig:myci}}
\end{figure}

\begin{lemma}\label{lemma:adjacentcliques}
Let $G$ be a $3$-tessellable diamond-free graph. 
If $C_1$ and $C_2$ are two maximal cliques of $G$ with a common vertex, then $C_1$ and $C_2$ cannot be both exposed by a minimum tessellation cover.
\end{lemma}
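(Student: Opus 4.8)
The plan is a proof by contradiction. Let $\mathcal{C}=\{\mathcal{T}_1,\dots,\mathcal{T}_t\}$ be a minimum tessellation cover of $G$, so $t=T(G)\le 3$, and suppose both $C_1$ and $C_2$ are exposed by $\mathcal{C}$. I would begin with the trivial reductions and the structural backbone. A one-vertex clique has no edges and so is never exposed, hence $|C_1|,|C_2|\ge 2$; and since $G$ is diamond-free, the listed property that any two maximal cliques meet in at most one vertex gives $C_1\cap C_2=\{v\}$ for a single vertex $v$, and moreover forces every edge of $G$ to lie in a unique maximal clique, so every clique of $G$ on at least two vertices is contained in a unique maximal clique. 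The structural fact I would isolate first is: \emph{if a part $P$ of some tessellation $\mathcal{T}_i$ meets a maximal clique $K$ in at least two vertices, then $P\subseteq K$} --- because the cliques $P$ and $P\cap K$ then have the same (unique) maximal clique, which is $K$.

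Next I would localise the argument at the cut vertex $v$. For each $i$ let $P_i$ be the part of $\mathcal{T}_i$ containing $v$, and for $j\in\{1,2\}$ set $S_j=\{\,i : \mathcal{T}_i\text{ covers some edge }vu\text{ with }u\in C_j\setminus\{v\}\,\}$. If $i\in S_j$ then $u\in P_i$ for the corresponding $u$, so $|P_i\cap C_j|\ge 2$ and the structural fact gives $P_i\subseteq C_j$. Two observations follow. First, $S_1\cap S_2=\emptyset$: an index lying in both would force $P_i\subseteq C_1\cap C_2=\{v\}$, so $\mathcal{T}_i$ would cover no edge incident to $v$ at all, contradicting membership in either $S_j$. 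Second, each $S_j$ is nonempty, because $C_j$ has $|C_j|-1\ge 1$ edges at $v$ and each of them is covered by some tessellation of $\mathcal{C}$.

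To conclude I would invoke pigeonhole: $S_1$ and $S_2$ are disjoint nonempty subsets of $\{1,\dots,t\}$ with $t\le 3$, so at least one of them is a singleton, say $S_1=\{i\}$. Then $\mathcal{T}_i$ alone must cover every edge $vu$ with $u\in C_1\setminus\{v\}$, which means $C_1\setminus\{v\}\subseteq P_i$, hence $C_1\subseteq P_i$; together with $P_i\subseteq C_1$ (from $i\in S_1$) this yields $P_i=V(C_1)$. Thus $V(C_1)$ is one of the cliques of the tessellation $\mathcal{T}_i$, so $E(C_1)\subseteq\mathcal{E}(\mathcal{T}_i)$, contradicting that $C_1$ is exposed by $\mathcal{C}$.

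I expect the only nonroutine part to be the structural fact about parts meeting a maximal clique, together with carefully handling the degenerate case where a tessellation assigns $v$ to the singleton part $\{v\}$ (this is exactly what makes $S_1\cap S_2=\emptyset$ work). Everything after that is a short counting argument, and it is precisely here that $3$-tessellability is used: with four tessellations one could have $|S_1|=|S_2|=2$ and the pigeonhole step would break down. I would also note, in passing, that minimality of $\mathcal{C}$ is not really needed --- the argument works for any tessellation cover of size at most $3$.
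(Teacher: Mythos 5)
Your proof is correct, and since the paper states this lemma without printing a proof, there is nothing to contrast it with; your argument (localize at the shared vertex $v$, use diamond-freeness to show the part of each tessellation containing $v$ can be contained in at most one of $C_1,C_2$, then pigeonhole the at most three tessellations over the two disjoint nonempty index sets) is precisely the natural argument the statement calls for, and every step checks out. Your side observations are also right: minimality is not needed, only that the cover has size at most $3$, and the disjointness of $S_1$ and $S_2$ is exactly where the singleton part $\{v\}$ case is absorbed.
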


\begin{theo}\label{prop:less4}
If $G$ is a $3$-tessellable diamond-free graph, then $3 \le \chi(K(G))\le 4$.
\end{theo}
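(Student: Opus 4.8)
The plan is to get the lower bound for free from Theorem~\ref{prop:upper} and to obtain the upper bound by turning a minimum tessellation cover into a proper $4$-colouring of $K(G)$.

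Since a $3$-tessellable graph that is not $2$-tessellable has $T(G)=3$, Theorem~\ref{prop:upper} immediately gives $3 = T(G) \le \chi(K(G))$, which is the lower bound. For the upper bound, I would fix a minimum tessellation cover $\mathcal{C}=\{\mathcal{T}_1,\mathcal{T}_2,\mathcal{T}_3\}$ of $G$ and let $S$ denote the set of maximal cliques of $G$ that are exposed by $\mathcal{C}$. Recalling that $V(K(G))$ is the set of maximal cliques of $G$, with two such cliques adjacent in $K(G)$ exactly when they intersect, I would define a colouring $c$ of $V(K(G))$ by $c(K)=4$ if $K\in S$, and $c(K)=\min\{\,i : E(K)\subseteq \mathcal{E}(\mathcal{T}_i)\,\}$ otherwise. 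This is well defined, since a non-exposed maximal clique is by definition fully covered by some $\mathcal{T}_i$, and it uses at most four colours.

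The remaining work is to verify that $c$ is a proper colouring of $K(G)$. Let $K$ and $K'$ be adjacent in $K(G)$, so they are distinct maximal cliques of $G$ with $K\cap K'\neq\varnothing$. If both were in $S$, this would contradict Lemma~\ref{lemma:adjacentcliques}, so at most one of them is exposed; hence if exactly one of them lies in $S$ the two colours are certainly different. If neither lies in $S$, suppose for contradiction $c(K)=c(K')=i$. Since every vertex of $G$ lies in exactly one clique of the partition $\mathcal{T}_i$ and the condition $E(K)\subseteq\mathcal{E}(\mathcal{T}_i)$ means this clique is the same for the two endpoints of each edge of $K$, connectedness of $K$ forces all of $K$ to lie in a single clique $Q$ of $\mathcal{T}_i$, and likewise $K'\subseteq Q'$ for a single clique $Q'$ of $\mathcal{T}_i$; choosing a vertex in $K\cap K'$ shows $Q=Q'$, so $K\cup K'$ is a clique of $G$, contradicting the maximality of $K$ unless $K=K'$. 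Therefore $c(K)\neq c(K')$, $c$ is proper, and $\chi(K(G))\le 4$.

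I expect the only delicate point to be this last case-analysis for two non-exposed adjacent cliques, which relies on the vertex-disjointness of the cliques inside a tessellation together with the maximality of cliques of $G$; diamond-freeness is used only indirectly, through Lemma~\ref{lemma:adjacentcliques} (and the hypothesis ``$3$-tessellable'' must be read as $T(G)=3$ for the lower bound to hold, since $2$-tessellable graphs have bipartite clique graph by the characterisation of Portugal).
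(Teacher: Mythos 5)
Your proof is correct and follows exactly the route the paper intends: the lower bound from Theorem~\ref{prop:upper} (reading ``$3$-tessellable'' as $T(G)=3$, as you rightly note), and the upper bound by colouring each non-exposed maximal clique with the index of a tessellation covering it and reserving a fourth colour for the exposed cliques, which form an independent set in $K(G)$ precisely by Lemma~\ref{lemma:adjacentcliques}. Your verification that two intersecting non-exposed cliques cannot receive the same tessellation index (via vertex-disjointness of the cliques within a tessellation and maximality) is the right and complete argument.
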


We finish this section with a construction which forces the tessellation cover number of a graph $H$, obtained from Construction~\ref{fact:addleaves} on a diamond-free graph $G$, to be $T(H) = \chi(K(H)) = \chi(K(G))$.
If $G$ has $T(G) < \chi(K(G))$, then there is no vertex that belongs to $\chi(K(G))$ maximal cliques. 
The graph $H$ obtained from $G$ by Construction~\ref{fact:addleaves} satisfies $\chi(K(H)) = \chi(K(G))$ and contains a vertex that belongs to $\chi(K(G))$ maximal cliques, which implies $T(H) = \chi(K(G))$. 

\begin{constr} \label{fact:addleaves}
\emph{Let $H$ be obtained from a graph $G$ by iteratively adding pendant vertices to a vertex of $G$ until it belongs to $\chi(K(G))$ maximal cliques.}
\end{constr}

Construction~\ref{fact:addleaves} implies that every diamond-free graph $G$ is a subgraph of a graph $H$ with $T(H) = \chi(K(H)) = \chi(K(G))$.
Note that this construction is not restricted to diamond-free graphs and it can also be applied several times to vertices that only belong to one maximal clique. The hardness proofs of Theorems~\ref{teo:np2} and \ref{prop:np21} rely on this result.


\section{Computational complexity}\label{sect:complexity}
\label{sec:four}

Now, we focus on the computational complexity of \textsc{$t$-tessellability}, by firstly proving that the problem is in $\mathcal{NP}$. In Section~\ref{sec:npc}, we use extremal tessellation covers obtained in the previous section to show $\mathcal{NP}$-completeness for any fixed $t \geq 3$ and when the problem instance is restricted to some graph classes. 
In Section~\ref{sec:poly}, we efficiently solve \textsc{$2$-tessellability} in linear time.

\begin{lemma}
\textsc{$t$-tessellability} is in $\mathcal{NP}$.
\end{lemma}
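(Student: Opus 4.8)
The plan is to show that \textsc{$t$-tessellability} admits a polynomial-size certificate that can be verified in polynomial time. The natural certificate is the tessellation cover itself: a list of $t$ tessellations $\mathcal{T}_1,\dots,\mathcal{T}_t$, each presented as a partition of $V(G)$ into blocks. Since each $\mathcal{T}_i$ is a partition of the $n$ vertices, it can be written down using $O(n\log n)$ bits, so the whole certificate has size $O(tn\log n)$, which is polynomial in the input size (and we may assume $t \le |E(G)|$, since otherwise the instance is trivially a yes-instance whenever $G$ has no isolated-edge obstruction — in fact $t\ge \chi'(G)$ always suffices by Theorem~\ref{prop:upper}, so without loss of generality $t\le n$).

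Next I would describe the verification procedure. Given the candidate $\mathcal{T}_1,\dots,\mathcal{T}_t$, the verifier performs two checks. First, for each $i$ and each block $B$ of $\mathcal{T}_i$, confirm that $B$ induces a clique in $G$, i.e.\ that every pair of vertices in $B$ is joined by an edge; this is $O(n^2)$ per block and $O(n^2)$ per tessellation after summing over blocks, hence $O(tn^2)$ overall. Also confirm that the blocks of $\mathcal{T}_i$ are pairwise disjoint and cover $V(G)$, which is linear. Second, verify the covering condition $\bigcup_{i=1}^t \mathcal{E}(\mathcal{T}_i) = E(G)$: for each edge $uv \in E(G)$, check that there exists some $i$ such that $u$ and $v$ lie in the same block of $\mathcal{T}_i$; this takes $O(t)$ per edge, hence $O(t|E(G)|)$ in total. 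If both checks pass, accept; otherwise reject. All of this runs in time polynomial in $|G|$ and $t$, and by the definition of tessellation cover the procedure accepts some certificate if and only if $G$ is $t$-tessellable.

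There is essentially no obstacle here; the only mild subtlety worth a sentence is ensuring the certificate length is genuinely polynomial in the \emph{input} size rather than merely in $n$. This is handled by the observation above that we may assume $t \le \chi'(G) \le n$ (a larger $t$ makes the instance a trivial yes-instance, decidable directly), so the certificate size $O(tn\log n) = O(n^2\log n)$ is polynomial regardless of how $t$ is encoded. This completes the argument that \textsc{$t$-tessellability} is in $\mathcal{NP}$.
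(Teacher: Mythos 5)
Your proposal is correct and follows essentially the same route as the paper: the certificate is the list of $t$ tessellations, the case of large $t$ is dismissed as a trivial yes-instance via Theorem~\ref{prop:upper} (the paper invokes Vizing's bound $t\ge\Delta+1$, you invoke $t\ge\chi'(G)$), and the verifier checks the partition-into-cliques and edge-covering conditions in polynomial time. No substantive difference.
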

\begin{proof}
Consider a certificate for an instance of \textsc{$t$-tessellability}, which consists of at most $t$ tessellations of a given graph $G$.
Note that each tessellation has at most $m$ edges.
Moreover, by Theorem~\ref{prop:upper} and the well-known Vizing's theorem on edge-colorability, if $t\geq \Delta+1$, then the answer is automatically \emph{YES}.
Otherwise, one can easily verify in polynomial time if the at most $m$ edges in each of the at most $t \leq \Delta+1$ tessellations form disjoint cliques in $G$ and if the at most $m (\Delta+1)$ edges in these tessellations cover all edges of~$G$.~\qed
\end{proof}

\vspace{-0.5cm}
\subsection{$\mathcal{NP}$-completeness}
\label{sec:npc}

A graph is \textit{planar} if it can be embedded in the plane such that no two edges cross each other.
We show a polynomial transformation from the $\mathcal{NP}$-complete \textsc{$3$-colorability} of planar graphs with maximum degree four~\cite{ArGarey} to \textsc{$3$-tessellabi\-lity} of planar graphs with maximum degree six.

\begin{figure}
\centering
  \includegraphics[height=2.5cm]{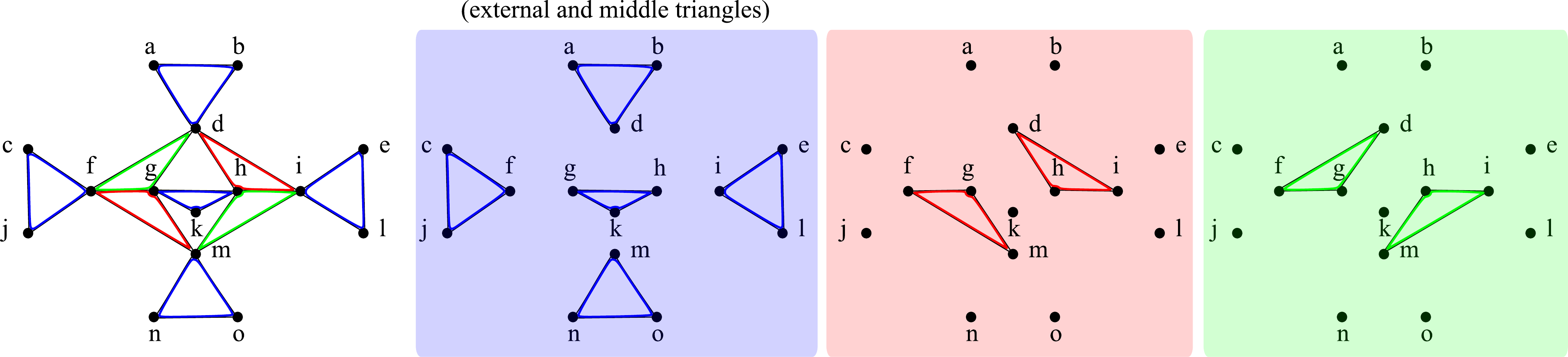}
  \caption{The 3-tessellable \textit{graph-gadget} of Lemma~\ref{lem:np1}.  Each tessellation is depicted separately. The external vertices are 
  $a,\ b, \ c, \ e, \ j, \ l, \ n, \ o$, and the internal vertices are the remaining ones.}
	\label{fig:Hajos_v1Mycielskib}
\end{figure}


\begin{lemma} 
Any tessellation cover of size 3 of the \emph{graph-gadget} depicted in Fig.~\ref{fig:Hajos_v1Mycielskib} contains a tessellation that has the middle and the external triangles.
\label{lem:np1} 
\end{lemma}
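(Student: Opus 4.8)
The plan is to argue by careful case analysis on the graph-gadget of Fig.~\ref{fig:Hajos_v1Mycielskib}, exploiting the fact that in a diamond-free graph any two maximal cliques share at most one vertex, so a triangle can lie in a tessellation only if \emph{all three} of its edges are placed there together. First I would enumerate the maximal cliques of the gadget: the \emph{middle triangle}, the \emph{external triangle}, and the remaining maximal cliques (the edges and smaller triangles connecting the external vertices $a,b,c,e,j,l,n,o$ to the internal structure). Since the gadget has exactly three tessellations available, I want to show that the middle and the external triangle are forced to be covered, and that no tessellation can cover one without the other.

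The key steps, in order, are as follows. (1) Show each of the two distinguished triangles must be covered by \emph{some} tessellation: if the middle (resp. external) triangle were exposed, then by a counting/incidence argument on the degrees of its vertices the remaining edges incident to it could not all be packed into the three tessellations — here I would use that a vertex of degree $d$ needs its $d$ incident edges spread across $\geq \lceil d/(\omega-1)\rceil$ tessellations when the local clique sizes are bounded, and the gadget is built so that a vertex of one triangle has too high a degree to afford losing the "free" triangle edge. (2) Show that whichever tessellation $\mathcal{T}$ contains the middle triangle must \emph{also} contain the external triangle: the middle triangle's vertices are joined by internal edges to the external triangle's vertices in a pattern (visible in the figure) forcing that, once the middle triangle occupies a tessellation, the edges from its vertices to the external vertices are blocked in $\mathcal{T}$ and must go into the other two tessellations, which then leaves the external triangle's three edges with no room in those two — so the external triangle must join $\mathcal{T}$. (3) Conclude symmetrically, so a single tessellation carries both triangles.

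The main obstacle I expect is step~(2): turning the picture into a clean combinatorial argument that the edges linking the middle triangle to the external triangle cannot be accommodated in the two non-$\mathcal{T}$ tessellations unless the external triangle collapses into $\mathcal{T}$. This will require tracking, for each of the eight external vertices and the internal vertices, exactly which edges are forced out of $\mathcal{T}$ once the middle triangle is placed, and then a pigeonhole count showing the remaining two tessellations are saturated. I would organize this by first fixing $\mathcal{T} \ni$ middle triangle, marking every edge incident to a middle-triangle vertex as unavailable in $\mathcal{T}$, and then checking that the subgraph these edges induce, together with the external triangle's edges, is not $2$-tessellable (e.g.\ because its clique graph contains an odd cycle, invoking Portugal's characterization of $2$-tessellable graphs). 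The rest — step~(1) and the symmetric closing of step~(3) — is routine degree bookkeeping once the gadget's adjacency list is written out explicitly from Fig.~\ref{fig:Hajos_v1Mycielskib}.
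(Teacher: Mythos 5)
Your proposal has genuine gaps. The most important one is that you have misread the gadget: the lemma speaks of ``the external \emph{triangles}'' in the plural. The gadget has one middle triangle and several external triangles (the eight external vertices $a,b,c,e,j,l,n,o$ are distributed among them; the proof of Theorem~\ref{prop:npplanar} needs one external triangle per potential neighbour of a degree-at-most-four vertex, and Section~\ref{sec:conc} explicitly refers to ``the four non external triangles that share two vertices of external triangles''). Your case analysis is built around a single ``external triangle'', and your step~(3) concludes only that ``a single tessellation carries both triangles''. That is strictly weaker than the lemma, and it is precisely the stronger statement --- \emph{all} external triangles lie in the same tessellation as the middle one --- that Theorem~\ref{prop:npplanar} uses to propagate colours between adjacent gadgets. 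Symmetry might repair this, but only after the argument for one external triangle is actually carried out and the gadget's symmetry is checked.

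The second problem is that the proposal defers all of its substance. Each step is justified by phrases such as ``a counting/incidence argument on the degrees'', ``visible in the figure'', and ``routine degree bookkeeping once the gadget's adjacency list is written out''. For a lemma whose entire content is a finite check on one specific small graph, that bookkeeping \emph{is} the proof; you also assume without verification that the gadget is diamond-free (so that a triangle enters a tessellation only as a whole) and that the relevant degrees are large enough to force step~(1). Finally, the one concrete tool you name for the crucial step~(2) is misapplied: you propose to show that the set $E'$ of edges excluded from $\mathcal{T}$ cannot be covered by the remaining two tessellations by testing whether the spanning subgraph $(V,E')$ is $2$-tessellable via Portugal's clique-graph criterion. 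But a tessellation of the gadget partitions its vertices into cliques of the \emph{gadget}, which need not be cliques of $(V,E')$; covering $E'$ by two tessellations of the gadget is therefore not equivalent to $2$-tessellability of $(V,E')$, so non-$2$-tessellability of that subgraph does not rule out the cover you need to exclude. The exclusion must be argued directly against the cliques of the gadget itself.
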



\begin{constr} \label{fact:planar}
\emph{Let graph $H$ be obtained from a graph $G$ by local replacements of the vertices of $G$ such that each vertex $u$ of $G$ represents a graph-gadget of Fig.~\ref{fig:Hajos_v1Mycielskib} denoted by $u$-gadget and each edge $uv$ of $G$ represents the intersection of the $u$-gadget with the $v$-gadget by identifying two external vertices of external triangles of those graph-gadgets. 
}
\end{constr}

\begin{theo}\label{prop:npplanar}
\textsc{$3$-tessellability} of planar graphs with $\Delta\leq 6$ is $\mathcal{NP}$-complete.
\end{theo}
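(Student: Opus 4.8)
The plan is to combine the already-established membership in $\mathcal{NP}$ with a polynomial reduction from \textsc{$3$-colorability} of planar graphs with maximum degree four, which is $\mathcal{NP}$-complete~\cite{ArGarey}. Given such a graph $G$, let $H$ be the graph produced by Construction~\ref{fact:planar}: replace every vertex $u$ of $G$ by a copy of the graph-gadget of Fig.~\ref{fig:Hajos_v1Mycielskib} (the \emph{$u$-gadget}) and, for every edge $uv$ of $G$, identify one external vertex of an external triangle of the $u$-gadget with one external vertex of an external triangle of the $v$-gadget, using a \emph{distinct} external vertex for each edge incident to $u$. First I would check that $H$ is a legal instance. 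Since the gadget is planar and can be drawn with its eight external vertices on its outer face in a fixed cyclic order, and since each vertex of $G$ has degree at most four, one may substitute a copy of the gadget for each vertex in a planar embedding of $G$ and route the at most four incident edges to distinct external vertices respecting the cyclic order; hence $H$ is planar. For the degree bound, internal vertices of the gadget have degree at most $6$ in $H$, while an external vertex has degree at most $2$ inside a gadget and undergoes at most one identification, so its degree in $H$ is at most $4$; therefore $\Delta(H)\le 6$. The reduction is polynomial because each gadget has constant size, so $|V(H)| = O(|V(G)|)$.

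The heart of the proof is the equivalence ``$G$ is $3$-colorable $\iff$ $H$ is $3$-tessellable'', driven by Lemma~\ref{lem:np1}. I would first record that every $u$-gadget is an induced subgraph of $H$: internal vertices have no edges leaving the gadget, and an identified external vertex $x$ sees only its two gadget-neighbours on each side, whose endpoints on the two sides are pairwise non-adjacent in $H$. Consequently, restricting any size-$3$ tessellation cover of $H$ to the vertex set of a $u$-gadget yields a size-$3$ tessellation cover of that gadget, so by Lemma~\ref{lem:np1} exactly one of its three tessellations, restricted to the $u$-gadget, carries the middle triangle and all external triangles; define $\phi(u)\in\{1,2,3\}$ to be its index.

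For the forward direction, from a proper $3$-colouring $\phi$ of $G$ I would build a cover $\mathcal{T}_1,\mathcal{T}_2,\mathcal{T}_3$ of $H$ by placing inside each $u$-gadget its size-$3$ cover, relabelled so that the tessellation carrying the middle and external triangles receives label $\phi(u)$; in the other two tessellations of that cover each external vertex is a one-vertex clique. Gluing these gadget tessellations along identified vertices is consistent: at a vertex $x$ arising from an edge $uv$ we have $\phi(u)\ne\phi(v)$, so in $\mathcal{T}_{\phi(u)}$ the vertex $x$ lies in its $u$-side external triangle and is a singleton on the $v$-side (symmetrically for $\mathcal{T}_{\phi(v)}$), and in the remaining tessellation $x$ is a singleton on both sides; since the $u$-side and $v$-side neighbours of $x$ are mutually non-adjacent in $H$, no clique is spoiled, and all edges of $H$ (which all lie inside a single gadget) are covered, so $H$ is $3$-tessellable. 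For the converse, take a size-$3$ cover of $H$, the map $\phi$ defined above, and suppose $\phi(u)=\phi(v)=i$ for some edge $uv$ of $G$ with identified vertex $x$. Then the clique of $\mathcal{T}_i$ containing $x$ must contain the two $u$-side neighbours of $x$ (because $\mathcal{T}_i$ carries $x$'s external triangle in the $u$-gadget) and likewise the two $v$-side neighbours of $x$; but a $u$-side and a $v$-side neighbour of $x$ are non-adjacent in $H$, contradicting that this set is a clique. Hence $\phi$ is a proper $3$-colouring of $G$, completing the reduction.

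I expect the main obstacle to be the bookkeeping around Lemma~\ref{lem:np1}: one must read off from Fig.~\ref{fig:Hajos_v1Mycielskib} that the gadget's size-$3$ cover is rigid enough to force, up to relabelling of the three tessellations, the ``carrier''/singleton structure used above, yet flexible enough that the carrier tessellation may be assigned any of the three labels — precisely what makes $\phi(u)$ both well-defined and freely choosable. With that structural description of the gadget in hand, together with the elementary planarity and degree verifications, the argument goes through as sketched.
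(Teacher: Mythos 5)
Your proposal is correct and follows essentially the same route as the paper: the same reduction from \textsc{$3$-colorability} of planar graphs with $\Delta\le 4$ via Construction~\ref{fact:planar}, with Lemma~\ref{lem:np1} forcing the middle and external triangles of each gadget into one tessellation, which plays the role of the colour. Your write-up is in fact more careful than the paper's on the points it glosses over (planarity and degree verification, the fact that restricting a cover of $H$ to a gadget yields a cover of the gadget so that Lemma~\ref{lem:np1} applies, and the consistency of the gluing at identified vertices).
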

\begin{proof}
Let $G$ be an instance graph of \textsc{$3$-colorability} of planar graphs with $\Delta \leq 4$ and $H$ be obtained by Construction~\ref{fact:planar} on $G$.
Notice that applying Construction~\ref{fact:planar} on a planar graphs with $\Delta\leq 4$ results on a planar graph with $\Delta\leq 6$.

Suppose that $G$ is $3$-colorable. Then, $H$ is $3$-tessellable because the middle and the external triangles of a $v$-gadget can be covered by the tessellation related to the color of $v$ and the remaining triangles of the $v$-gadget can be covered by the other two tessellations.


Suppose that $H$ is $3$-tessellable. Then, $G$ is $3$-colorable because the color of $v$ in $G$ can be related to the tessellation that covers the middle triangle of the $v$-gadget. This assignment is a 3-coloring because by Lemma~\ref{lem:np1} all external triangles of the $v$-gadget belong to the same tessellation of the middle triangle. The external triangles of the $v$-gadget are connected to the external triangles of the graph-gadgets of the neighborhood of $v$. Then, the tessellations of the latter external triangles must differ from the external triangles of the $v$-gadget.
This implies that the neighborhood of vertex $v$ receives different colors from the color of $v$.~\qed
\end{proof}


The next construction allows us to show a hardness proof of \textsc{$t$-tessellability} for any fixed $t \geq 4$.

\begin{constr} \label{fact:ttess}
\emph{Let $H$ be a graph obtained from a graph $G$ and a subset $F\subseteq V(G)$ as follows. Initially $H$ is equal to $G$. 
Let $F=\{v_1,...,v_{|F|}\}$ be a subset of vertices of $V(H)$.
Add to $H$ a complete graph $U=\{u_1,...,u_{|F|}\}$.
Add three vertices $c_1$, $c_2$ and $c_3$ adjacent to all vertices of $U$.
Consider an integer $t\ge 4$.
 For each $c_i$ $(1 \leq i \leq 3)$, add $t-1$ pendant vertices incident to $c_i$.
For each $1 \leq j \leq |F|$, add an edge $v_ju_j$ and vertices $w_{j,l}$ for $1 \leq l \leq t-3$ adjacent to both $v_j$ and $u_j$.
For each vertex $w_{j, l}$, add $t-1$ pendant vertices incident to $w_{j, l}$.}
\end{constr}

\begin{theo}\label{teo:np2}
\textsc{$t$-tessellability} for any fixed $t \geq 4$ is $\mathcal{NP}$-complete.
\end{theo}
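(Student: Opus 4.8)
The plan is to reduce from \textsc{$t$-edge-colorability} of $(t-1)$-regular graphs (or from a similarly hard colorability problem), via Construction~\ref{fact:ttess} applied with $F = V(G)$. First I would observe that the gadget attachments in Construction~\ref{fact:ttess} are all instances of the pendant-vertex trick validated by Construction~\ref{fact:addleaves} and the accompanying remark: attaching $t-1$ pendant vertices to a vertex $x$ forces $x$ to lie in $t$ distinct maximal cliques (the $t-1$ pendant edges plus whatever clique $x$ already belongs to), so in any tessellation cover of size $t$ the tessellation covering each maximal clique at $x$ is uniquely determined and all $t$ tessellations are ``used up'' at $x$. Applying this to each $c_i$ and to each $w_{j,l}$ pins down a rigid local structure. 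In particular, at $w_{j,l}$ the two edges $w_{j,l}v_j$ and $w_{j,l}u_j$ — which form the triangle $v_j u_j w_{j,l}$ together with edge $v_ju_j$, but note this triangle is the only clique through $w_{j,l}$ apart from pendants — must be covered, forcing the single tessellation $\mathcal T_{j,l}$ that contains the clique $\{v_j,u_j,w_{j,l}\}$; since there are $t-3$ such vertices $w_{j,l}$ plus the edge $v_ju_j$ itself, these occupy $t-3$ tessellations at the pair $(v_j,u_j)$, and $v_j$ and $u_j$ each have their own remaining edges (into $G$ and into $U$ respectively) that must be covered by the leftover tessellations.

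The heart of the argument is the interaction between the clique $U$, the apex vertices $c_1,c_2,c_3$, and the copy of $G$. Each $c_i$ is adjacent to all of $U$ and carries $t-1$ pendants, so $c_i$ lies in $t$ maximal cliques; in a size-$t$ cover, one tessellation $\mathcal S_i$ must cover the maximal clique containing $c_i$ and (a subset of) $U$. I would argue that the three vertices $c_1,c_2,c_3$ together with $U$ behave like a ``palette-fixing'' block: because $U\cup\{c_1,c_2,c_3\}$ is itself highly constrained (it contains many triangles), the tessellations restricted to $U$ are forced into a canonical pattern, so that each $u_j\in U$ is left with exactly one ``free'' tessellation color available for the edge $u_jv_j$ and its incident edge structure — and this free color is the same bijective assignment $U\to\{1,\dots,t\}$ one would get from a proper coloring. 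Pushing this through the edge $u_jv_j$, the color forced on $v_j$ in the $G$-part is constrained, and two adjacent $v_j, v_{j'}$ in $G$ (with edge $v_jv_{j'}\in E(G)$, still present in $H$ since $H\supseteq G$) must receive colors that let the edge $v_jv_{j'}$ be covered by a tessellation not already committed at either endpoint — which translates precisely into a proper $t$-(edge- or vertex-)coloring constraint on $G$.

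The two directions then go as follows. For the forward direction, given a proper coloring of $G$ with $t$ colors (in whichever variant the reduction uses), I would explicitly build $t$ tessellations of $H$: color class $i$ of $G$ dictates which tessellation covers the $u_jv_j$-edges; the $w_{j,l}$-triangles are distributed among the remaining $t-3$ tessellations in a fixed round-robin; the $c_i$-cliques into $U$ and the pendant edges everywhere are assigned their forced colors; one checks each tessellation is a genuine vertex-partition into cliques and that together they cover $E(H)$. For the converse, given a size-$t$ tessellation cover of $H$, the rigidity established above lets me read off a coloring of $G$ from the tessellations covering the $u_jv_j$-edges and verify properness. Finally I note $H$ is polynomial-size in $G$ for each fixed $t$, and membership in $\mathcal{NP}$ is the earlier lemma, giving $\mathcal{NP}$-completeness.

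I expect the main obstacle to be the middle paragraph: proving that the block $U\cup\{c_1,c_2,c_3\}$ with its pendants genuinely forces a \emph{bijective} assignment of the $t$ tessellation colors onto $U$, rather than merely a weaker constraint that several $u_j$ might share a free color. Getting this rigidity right — likely by a careful counting argument on how many of the $t$ tessellations each triangle through $U$ consumes, exploiting that three apexes $c_1,c_2,c_3$ (not two) are needed to break the symmetry — is where the real work lies; the pendant-forcing lemma and the two-direction bookkeeping are comparatively routine once that core claim is nailed down.
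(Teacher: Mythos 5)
There is a genuine gap, and it sits exactly where you predicted: the ``palette-fixing'' claim in your middle paragraph is false for this construction, and with it your choice of reduction target collapses. At each $u_j$ the three forced cliques $U\cup\{c_1\}$, $U\cup\{c_2\}$, $U\cup\{c_3\}$ each contain the \emph{entire} clique $U$, so every $u_j$ receives the \emph{same} three tessellations from that side, and the $t-3$ triangles $\{v_j,u_j,w_{j,l}\}$ are then forced to occupy exactly the remaining $t-3$ tessellations. Hence at $u_j$ there is no single ``free'' colour left over for $u_jv_j$ (that edge is already covered by the triangle tessellations), there is no bijection $U\to\{1,\dots,t\}$ (indeed $|U|=|F|$ is in general much larger than $t$), and nothing resembling a proper $t$-colouring of $G$ is being encoded. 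What the block $U\cup\{c_1,c_2,c_3\}$ actually does is \emph{synchronize}: it forces the three tessellations used on the cliques $U\cup\{c_i\}$ to be the same at every $u_j$, so the $t-3$ triangles at each $v_j$ are pushed into the complementary $t-3$ tessellations, leaving the \emph{same} set of exactly $3$ tessellations available at every $v_j\in F$. The construction therefore reduces $t$-tessellability of $H$ to $3$-tessellability of the underlying graph, not to $t$-(edge- or vertex-)colourability.

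Consequently your source problem is also wrong. The paper's proof starts from a planar graph $G$ with $\Delta\le 4$, builds $H'$ by Construction~\ref{fact:planar} (so that $H'$ is $3$-tessellable iff $G$ is $3$-colourable, Theorem~\ref{prop:npplanar}), and only then applies Construction~\ref{fact:ttess} to $H'$ with $F$ the set of internal gadget vertices; the hardness for $t\ge 4$ is inherited from the hardness of $3$-tessellability. Your pendant-forcing analysis of the $w_{j,l}$ and $c_i$ vertices is correct and is the right first step, but to complete the argument you must (i) replace the bijectivity claim by the synchronization argument above, and (ii) reduce from an already-hard $3$-tessellability instance rather than from $t$-edge-colourability of $(t-1)$-regular graphs, which this gadget cannot express.
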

\textit{Sketch of the proof.}
Let $G$ be an instance graph of \textsc{$3$-colorability} of planar graphs with $\Delta \leq 4$. Let $H'$ be the graph obtained from Construction~\ref{fact:planar} on~$G$. Let $H$ be the graph obtained from Construction~\ref{fact:ttess} on~$H'$ with $F$ being the set of all internal vertices of all graph-gadgets of $H'$.

Let $H[H']$ be the induced subgraph of $H$ by the vertices of $H'$.
$H$ is $t$-tessellable (for $t \geq 4$) if and only $H[H']$ is $3$-tessellable and the ${\mathcal{NP}}$-completeness follows imediately from Theorem~\ref{prop:npplanar}.~\qed

Next, we show a polynomial transformation from the $\mathcal{NP}$-complete \textsc{$3$-colo\-ra\-bility}~\cite{ArGarey} to \textsc{$4$-tessellability} of chordal $(2,1)$-graphs.
This proof is based on a result of Bodlaender et al.~\cite{ArBod} for \textsc{$3$-$L(0,1)$-coloring} of split graphs.

\begin{constr} \label{fact:npc21cordal}
\emph{Let $H$ be a graph obtained from a non-bipartite graph $G$ as follows. Initially $V(H)=V(G)\cup E(G)$ and $E(H)=\emptyset$.
Add edges to $H$ so that the $E(G)$ vertices induce a clique. For each $e=vw \in E(G)$, add to $H$ edges $ve$ and $we$.
For each vertex $v\in V(H)\cap V(G)$, add three pendant vertices incident to $v$. 
Add a vertex $u$ adjacent to all $E(G)$ vertices. Add three pendant vertices incident to $u$.  Denote all pendant vertices by $V_2$. }
\end{constr}


\begin{theo}
\label{prop:np21}
\textsc{$4$-tessellability} of chordal $(2,1)$-graphs is $\mathcal{NP}$-complete.
\end{theo}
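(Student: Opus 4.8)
The plan is to exhibit Construction~\ref{fact:npc21cordal} as a polynomial-time reduction from \textsc{$3$-colorability} to \textsc{$4$-tessellability} restricted to chordal $(2,1)$-graphs. First I would verify that the output graph $H$ is always a chordal $(2,1)$-graph: the vertex set partitions into the clique on the $E(G)$-vertices together with the single vertex $u$ (one clique), and the $V(G)$-vertices plus all pendant vertices $V_2$ (which form a stable set since no two original vertices are made adjacent and pendant vertices have degree one), giving a $(2,1)$-partition — actually a split-graph-like structure. Chordality follows because every cycle in $H$ must live inside the clique part up to pendant attachments: any would-be induced cycle through a $V(G)$-vertex $v$ passes through at most two edge-vertices $e, f$ adjacent to $v$, and $e,f$ are themselves adjacent inside the $E(G)$-clique, so no induced cycle of length $\ge 4$ survives. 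The pendant vertices and the gadget vertex $u$ obviously cannot create induced long cycles. I would state these two structural claims and dispatch them quickly.

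The core is the biconditional: $G$ is $3$-colorable if and only if $H$ is $4$-tessellable. For the forward direction, given a proper $3$-coloring $\phi$ of $G$ with colors $\{1,2,3\}$, I build four tessellations. The key idea (mirroring the Bodlaender et al.\ argument for $L(0,1)$-coloring of split graphs) is that the edge-vertex $e=vw$ must have its two pendant-bearing edges $ve$ and $we$ placed in tessellations ``labeled'' by colors distinct from each other, and each original vertex $v$ together with its three pendants forces a fourth tessellation to cover the star at $v$. Concretely: for each $e = vw \in E(G)$ put $ve$ in tessellation $T_{\phi(v)}$ and $we$ in $T_{\phi(w)}$ (legal because $\phi(v)\ne\phi(w)$ and because two edge-vertices $e,e'$ incident to the same $v$ can share the tessellation $T_{\phi(v)}$ as the clique $\{v,e,e'\}$ only if $e,e'$ are adjacent, which they are inside the $E(G)$-clique — so one must be careful here and instead cover the big $E(G)$-clique edges within a single tessellation, say also $T_{\phi(\cdot)}$-compatible, or a dedicated one). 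The pendant edges at each $v\in V(G)$, being three in number and all incident to $v$, need three distinct tessellations, and the edge $ve$ uses a fourth; since there are exactly four tessellations this is tight and forces $ve, we$ into the two tessellations not used by the pendants — encoding the color. The three pendants at $u$ and $u$'s edges to the clique are handled by the fourth tessellation. I would lay out the assignment explicitly and check clique-disjointness tessellation by tessellation.

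For the reverse direction, suppose $H$ has a tessellation cover $\mathcal{C} = \{T_1,\dots,T_4\}$. The three pendants at $u$ force $u$'s incident edges into the fourth tessellation; likewise the three pendants at each $v\in V(G)$ occupy three tessellations, so each edge $ve$ ($e\ni v$) lies in the single remaining one — define $\phi(v)$ to be that index (normalized so $\phi$ avoids the $u$-tessellation). Then for $e=vw\in E(G)$: the triangle $\{v,w,e\}$ is \emph{not} present in $H$ (since $v,w$ are nonadjacent), so $ve$ and $we$ lie in different tessellations, whence $\phi(v)\ne\phi(w)$, giving a proper $3$-coloring. I expect the main obstacle to be the bookkeeping that pins down exactly which tessellation covers the large $E(G)$-clique and the edges $ue$, so that the ``remaining tessellation at $v$'' is genuinely well-defined and ranges over only three values — this is where the pendant-vertex padding (three pendants everywhere) and the vertex $u$ do their work, and it must be argued that no tessellation can simultaneously serve as a pendant-tessellation at $v$ and carry the edge $ve$. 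Once that rigidity is established, both directions close and $\mathcal{NP}$-completeness follows, since membership in $\mathcal{NP}$ is already proved and $H$ has size polynomial in $|G|$.
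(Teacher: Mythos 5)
Your reduction is the paper's: Construction~\ref{fact:npc21cordal} from \textsc{$3$-colorability}, with the pendant triples forcing the star-clique at each $v\in V(G)$ (and at $u$) into a single tessellation, the big clique $E(G)\cup\{u\}$ monopolizing one of the four, and the shared edge-vertex $e$ forcing $\{v\}\cup E_G(v)$ and $\{w\}\cup E_G(w)$ into different tessellations when $vw\in E(G)$. The biconditional argument you sketch, including the step you flag as the ``main obstacle'' (that the tessellation covering $E(G)\cup\{u\}$ cannot also carry any edge $ve$, since the clique of that tessellation containing $e$ already contains $u$ and $u\not\sim v$), is exactly the paper's key observation, so the route is the same and the remaining work is only writing it out.

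Two concrete slips to fix. First, your $(2,1)$-partition is wrong: $V(G)\cup V_2$ is \emph{not} a stable set, because each $v\in V(G)$ is adjacent to its own three pendants in $V_2$. You are in effect claiming $H$ is a split graph, which it is not (a pendant edge at $v$ and a pendant edge at $w\neq v$ form an induced $2K_2$), and the paper explicitly leaves split graphs open. The correct partition is the clique $E(G)$, the stable set $V(G)\cup\{u\}$, and the stable set $V_2$. Second, ``the three pendants at $u$ \ldots are handled by the fourth tessellation'' cannot be right: the three pendant edges at $u$ all share the vertex $u$, so they must lie in three \emph{distinct} tessellations (this is precisely the mechanism you rely on at the $V(G)$-vertices); only the edges from $u$ into $E(G)$, together with the clique $E(G)$ itself, go into the fourth. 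Both are local repairs that do not affect the structure of the argument.
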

\textit{Sketch of the proof.}
Consider the graph $H$ obtained by Construction~\ref{fact:npc21cordal} on a non-bipartite  instance graph $G$ of \textsc{$3$-colorability}.
We have $V(H) = V(G) \cup E(G) \cup V_2 \cup \{u\}$.
Clearly, $H$ is chordal and $(2,1)$ with $E(G)$ as a clique, $V(G) \cup \{u\}$ as a stable set, and $V_2$ as another stable set.

The key idea of the proof is that the pendant vertices $V_2$ force the maximal cliques incident to the vertices in $V(G)\cup \{u\}$ to be non-exposed.
Notice that the tessellation used to cover the clique $E(G)\cup \{u\}$ cannot cover any other maximal clique with size greater than one incident to $V(G)$ vertices and $E(G)$ vertices in $H$. 
Therefore, there are only three remaining tessellations to cover those maximal cliques.
This implies that $G$ is $3$-colorable if and only if $H$ is $4$-tessellable because if $uv \in E(G)$, then they receive different colors in a $3$-coloring of~$G$. The maximal clique containing vertices $E(G)\cup \{u\}$ and the maximal clique containing vertices $E(G)\cup \{v\}$ share a same neighborhood $uv$ and must be covered by different tessellations.~\qed

\begin{constr} \label{fact:npc12}
\emph{Let $H'$ be a graph obtained from the graph of Construction~\ref{fact:npc21cordal} by transforming the stable set of $V(G)$ into a clique, removing one pendant vertex of each vertex of $V(G)$, and adding a vertex $u'$ adjacent to all vertices of $V(G)$ with three new pendant vertices incident to it.}
\end{constr}

Clearly, $H'$ from Construction~\ref{fact:npc12} is a $(1,2)$-graph.
Observe that $H'$ is $4$-tessellable if and only if $H$ (from Theorem~\ref{prop:np21}) is $4$-tessellable.
Therefore, \textsc{$4$-tessellability} is $\mathcal{NP}$-complete for $(1,2)$-graphs.

Next, we show a polynomial transformation from the $\mathcal{NP}$-complete problem \textsc{NAE $3$-SAT}~\cite{ArGarey} to \textsc{$3$-tessellability} of diamond-free graphs with diameter five. 
This proof is given in two phases: given an instance~$I$ of \textsc{NAE $3$-SAT} we construct a clique graph $K(G)$ for which we show that there is a $3$-coloring of $K(G)$ if and only if~$I$ is satisfiable; subsequently, we show that there is a construction of a graph with diameter five $G$ for which $G$ is $3$-tessellable if and only if $K(G)$ is $3$-colorable.

\begin{constr} \label{fact:nae1}
\emph{Let $K(G)$ be a graph obtained from an instance of \textsc{NAE $3$-SAT} as follows.
For each variable $v$ of $I$, include a $P_2$ with vertices $v$ and $\overline{v}$ in $K(G)$.
Moreover, add a vertex $u$ adjacent to all $P_2$'s vertices.
And, for each clause $\{a \lor b \lor c\}$ of $I$, add a triangle with vertices $T_a, T_b, T_c$ in $K(G)$ and three edges $aT_a$, $bT_b$, and $cT_c$.}
\end{constr}

\begin{lemma} Let $K(G)$ be obtained from Construction~\ref{fact:nae1} on a \textsc{NAE $3$-SAT} instance $I$. Then $K(G)$ is $3$-colorable if and only if~$I$ is satisfiable.
\label{lem:npb1} 
\end{lemma}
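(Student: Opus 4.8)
The plan is to establish the two directions of the equivalence separately, exploiting the rigid structure that the vertex $u$ and the $P_2$-gadgets impose on any $3$-coloring of $K(G)$. First I would analyze the constraints coming from the variable gadgets. Since $u$ is adjacent to every vertex $v$ and $\overline{v}$, in any proper $3$-coloring with colors $\{1,2,3\}$ the vertex $u$ uses one color, say color $3$, and then for each variable $v$ the pair $\{v,\overline{v}\}$ — which forms an edge — must be properly colored using only the two remaining colors $\{1,2\}$; in particular $v$ and $\overline{v}$ receive \emph{distinct} colors, both different from the color of $u$. This is exactly what lets us read off a truth assignment: declare $v$ true if it gets color $1$ and false if it gets color $2$ (with $\overline v$ forced to the opposite value), so a $3$-coloring of $K(G)$ canonically yields a consistent assignment to the variables of $I$, and conversely any assignment determines the colors of all variable-literal vertices and of $u$.

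Next I would examine the clause gadgets. For a clause $\{a\lor b\lor c\}$ we have a triangle on $T_a,T_b,T_c$, so these three vertices must receive all three colors in some order; moreover $T_a$ is adjacent to the literal vertex $a$, so $T_a$'s color differs from $a$'s color, and similarly for $b,c$. The key observation to pin down is: since each of $a,b,c$ has a color in $\{1,2\}$ (color $3$ being reserved for $u$), and since $T_a,T_b,T_c$ together use all of $\{1,2,3\}$, exactly one of $T_a,T_b,T_c$ gets color $3$ and the other two get colors $1$ and $2$. The literal, say $a$, whose triangle-neighbor $T_a$ is forced to avoid $a$'s color: if $T_a$ gets color $1$ then $a$ must get color $2$ (false), and if $T_a$ gets color $2$ then $a$ gets color $1$ (true). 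The claim I need is that a proper coloring of the clause gadget exists if and only if the three literals are not all equal in truth value, i.e. not all color $1$ and not all color $2$ — this is precisely the NAE condition. I would verify this by a short case check: if $a,b,c$ all had color $1$, then $T_a,T_b,T_c$ would each have to avoid color $1$, leaving only $\{2,3\}$ for a triangle on three vertices, impossible; symmetrically for all color $2$; and if the literals are not monochromatic one can always complete the triangle coloring, since at most two of the $T$'s are constrained to avoid the same color.

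Assembling these pieces: a proper $3$-coloring of $K(G)$ restricts to a consistent truth assignment (from the variable gadgets) under which every clause gadget is properly colored, hence every clause is not-all-equal, so $I$ is satisfiable in the NAE sense; conversely, a NAE-satisfying assignment determines colors on the variable and literal vertices and on $u$, and the clause-gadget case analysis shows each triangle can be properly colored, yielding a proper $3$-coloring of all of $K(G)$. The main obstacle I anticipate is the clause-gadget direction — carefully checking that \emph{every} non-monochromatic colour pattern on $(a,b,c)$ extends to a proper coloring of the triangle $T_aT_bT_c$ subject to the three non-adjacency-with-literal constraints, and that the monochromatic patterns genuinely fail; once that finite case analysis is done, gluing the gadgets together is routine because the gadgets share no vertices other than through the stated edges and $u$.
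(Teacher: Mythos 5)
Your proposal is correct and follows essentially the same route as the paper's proof: the vertex $u$ reserves one color, the $P_2$ edges force each literal pair onto the remaining two colors (encoding a truth assignment), and the clause triangle needing all three colors is exactly what rules out monochromatic literal triples, which is the NAE condition. Your converse direction is in fact spelled out a bit more carefully than the paper's (which gestures at the triangle completion somewhat loosely), but the underlying argument is the same up to a relabeling of the colors.
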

\begin{proof}
Note that, w.l.o.g., the color $1$ given to the vertex $u$ in a  $3$-coloring cannot be used in any vertex of a $P_2$. 
Moreover, each of the literal vertices $v$ and $\overline{v}$ of a $P_2$ receives either the color $2$ or $3$.
Assume w.l.o.g that a literal is true if its color is~$2$, and false otherwise.

If $K(G)$ is $3$-colorable, then there are no three vertices connected to a clause's triangle with the same color.
Otherwise, this color would not be used in the triangle vertices and $K(G)$ would not be $3$-colorable, a contradiction.
Therefore, the above assignment of values to literals would give a satisfiable solution to the instance.

Conversely, if $I$ is satisfiable, then one may assign color $2$ to each literal vertex which is true and color $3$ to its negation.
Moreover, vertex $u$ receives color~$1$.
Since there are no three literal vertices with the same color adjacent to the clause triangles, one may assign colors to the vertices of the triangles in a $3$-coloring where a vertex of the triangle adjacent to a vertex with color $2$ receives color $3$, and a vertex adjacent to a vertex with color $3$ receives color $2$. 
The other vertex receives color $1$.
\qed
\end{proof}


Next, we construct a graph $G$ for which its clique graph is the $K(G)$ obtained from Construction~\ref{fact:nae1}.


\begin{constr} \label{fact:nae2}
\emph{Let $G$ be obtained from its clique graph $K(G)$ (of Construction~\ref{fact:nae1}) as follows.
For each clause's triangle in $K(G)$, add a star with three leaves in $G$, where each of those leaves represents a literal of this clause.
Next, all $P_2$'s triangles in $K(G)$ are represented in $G$ by a clique $C$ of size the number of $P_2$'s.
Each vertex of this clique $C$ represents a variable of $K(G)$.
For each vertex $v$ of $C$ include the edges of two other cliques (one for each literal of the variable $v$) composed by the leaves of the stars which represent the literals $v$ and $\overline{v}$ and the vertex $v$ of $C$, as depicted in Fig.~\ref{fig:construction8}.}
\end{constr}

\begin{figure}[!h]
\begin{center}
  \includegraphics[width=11cm]{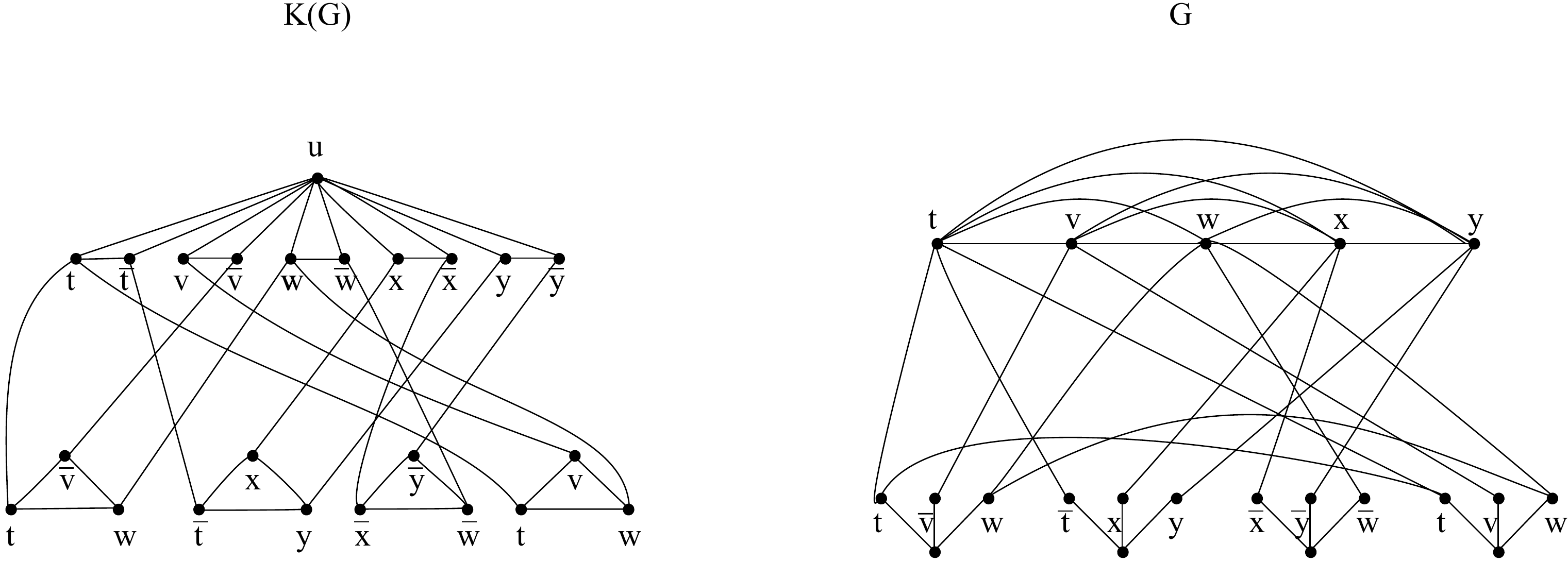}
  \caption{Example of Construction~\ref{fact:nae2}.}
  \label{fig:construction8}
\end{center}
\end{figure}

\begin{lemma} Let $K(G)$ be obtained by Construction~\ref{fact:nae1} on a \textsc{NAE $3$-SAT} instance~$I$ and $G$ be obtained by Construction~\ref{fact:nae2} on $K(G)$. Then $G$ is $3$-tessellable if and only if~$K(G)$ is $3$-colorable.
\label{lem:npb2} 
\end{lemma}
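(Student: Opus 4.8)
The plan is to establish the biconditional in Lemma~\ref{lem:npb2} by carefully verifying that the graph $G$ produced by Construction~\ref{fact:nae2} indeed has $K(G)$ (as defined in Construction~\ref{fact:nae1}) as its clique graph, and then transferring the coloring/tessellation equivalence through Theorem~\ref{prop:upper} and Lemma~\ref{prop:exposedk}. First I would identify the maximal cliques of $G$: there are the clause-stars (each a $K_{1,3}$, whose central edges are not in any triangle, hence each such edge is its own maximal clique — actually the three edges of each star are the three maximal cliques corresponding to the triangle vertices $T_a,T_b,T_c$), the big clique $C$ on the variable vertices, and for each variable $v$ the two ``literal cliques'' (one gathering the leaves representing occurrences of $v$ together with the vertex $v\in C$, the other for $\overline v$). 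I would check that two of these maximal cliques intersect exactly when the corresponding vertices of $K(G)$ are adjacent: a literal clique for $v$ meets $C$ (edge $v\,\overline v$ and $vu$ in $K(G)$ come from $C$ playing the role of the $u$-and-$P_2$ part), meets the star edge of each clause containing that literal (edge $aT_a$ in $K(G)$), and the two literal cliques of the same variable are disjoint from each other as vertices but both meet $C$; and clause-star edges sharing a star vertex give the clause triangle. So $K(G)$ matches Construction~\ref{fact:nae1}, where $u$ corresponds to the clique $C$.

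Once the clique-graph identification is done, the forward direction ($G$ $3$-tessellable $\Rightarrow$ $K(G)$ $3$-colorable) is the substantive one. The key point is that in any tessellation of $G$, a clique of the partition must be a clique of $G$, and since $G$ is diamond-free (which I would verify: the construction only creates cliques meeting pairwise in at most one vertex), each tessellation ``selects'' at most one maximal clique through each vertex in a conflict-free way. I would argue that a size-$3$ tessellation cover of $G$ forces every maximal clique that has the structural property of being ``surrounded'' — in particular the literal cliques, which each contain the vertex $v\in C$ of degree large enough, and the clause-star edges — to be non-exposed, because an exposed maximal clique would leave some edge uncovered given the tight budget. More precisely, I would use Lemma~\ref{prop:exposedk}: a minimum tessellation cover with no exposed maximal clique of size $\geq 2$ corresponds exactly to a proper $3$-coloring of $K(G)$. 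So the real work is showing that $T(G)=3$ forces such an exposure-free cover; this is where I would lean on the rigidity coming from the clique $C$ (any tessellation covers at most one literal clique per variable and cannot simultaneously cover $C$) together with the star gadgets, mirroring the counting argument sketched for Theorem~\ref{prop:np21}.

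The converse ($K(G)$ $3$-colorable $\Rightarrow$ $G$ $3$-tessellable) is the routine direction: given a proper $3$-coloring of $K(G)$, Theorem~\ref{prop:upper} immediately yields $T(G)\le\chi(K(G))\le 3$, so $G$ is $3$-tessellable. Hence combining with Lemma~\ref{lem:npb1}, $G$ is $3$-tessellable iff $I$ is satisfiable. I expect the main obstacle to be the forward direction's exposure argument: one must rule out that the three-tessellation budget is spent cleverly on exposing one maximal clique per variable while covering its edges piecemeal through smaller cliques of other tessellations. I would handle this by exploiting that each edge of a literal clique incident to $v\in C$ lies in a unique maximal clique of size $\geq 2$ (diamond-freeness), so ``covering it piecemeal'' is impossible — an edge not inside its maximal clique in tessellation $\mathcal T_i$ must be inside a size-$2$ clique of some $\mathcal T_j$, but the pendant/leaf structure and the clique $C$ leave no room, forcing the literal clique itself to appear in some tessellation. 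Making this local-to-global counting precise, clause-star-by-clause-star and variable-by-variable, is the technical heart of the proof.

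\begin{proof}
We first verify that the graph $G$ of Construction~\ref{fact:nae2} has the graph $K(G)$ of Construction~\ref{fact:nae1} as its clique graph, and that $G$ is diamond-free. The maximal cliques of $G$ are: (i) for each clause with literals $a,b,c$, the three edges of the associated star (each star edge lies in no triangle of $G$, so it is a maximal clique, and these correspond to the triangle vertices $T_a,T_b,T_c$ of $K(G)$); (ii) the clique $C$ on the variable vertices; (iii) for each variable $v$, the two literal cliques $K_v$ and $K_{\overline v}$, where $K_v$ consists of $v\in C$ together with all star-leaves representing occurrences of the literal $v$, and similarly for $K_{\overline v}$. Any two of these maximal cliques meet in at most one vertex: two star edges of the same clause share exactly the leaf they have in common; $K_v$ and $C$ share exactly the vertex $v$; $K_v$ and the star edge of a clause containing literal $v$ share exactly the leaf representing that occurrence; and $K_v,K_{\overline v}$ share no vertex (they share only the ``name'' of the variable). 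Hence $G$ is diamond-free. Forming the intersection graph of these maximal cliques: the three star edges of a clause pairwise intersect, giving the clause triangle $T_aT_bT_c$; $K_v$ is adjacent to exactly those $T_a$ with $a\in\{v,\overline v\}$ appearing in that clause, giving edges $vT_v$, $\overline v T_{\overline v}$; $K_v$ and $K_{\overline v}$ are both adjacent to $C$ but not to each other, so contracting $C$ to the single vertex $u$ and $K_v,K_{\overline v}$ to $v,\overline v$ we obtain the $P_2$ on $\{v,\overline v\}$ together with $u$ adjacent to all literal vertices. This is exactly $K(G)$ of Construction~\ref{fact:nae1}, with $u$ being the vertex corresponding to $C$.

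($\Leftarrow$) Suppose $K(G)$ is $3$-colorable. By Theorem~\ref{prop:upper}, $T(G)\le\chi(K(G))\le 3$, so $G$ is $3$-tessellable.

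($\Rightarrow$) Suppose $G$ is $3$-tessellable, and fix a tessellation cover $\{\mathcal T_1,\mathcal T_2,\mathcal T_3\}$ of $G$. We claim no maximal clique of $G$ of size at least two is exposed. Since $G$ is diamond-free, by property~(2) stated before Theorem~\ref{prop:omega=chi} every edge of $G$ lies in a unique maximal clique; in particular an edge of a literal clique $K_v$ incident to $v\in C$ lies in no other maximal clique of size $\ge 2$, so in any tessellation $\mathcal T_i$ this edge is covered only if $\mathcal T_i$ uses a clique contained in $K_v$ through $v$. Consider the vertex $v\in C$. In each $\mathcal T_i$ the clique of the partition containing $v$ is contained in one maximal clique through $v$, namely $C$, $K_v$, or $K_{\overline v}$; thus among $\mathcal T_1,\mathcal T_2,\mathcal T_3$ at most one tessellation covers (part of) $C$, at most one covers edges of $K_v$, and at most one covers edges of $K_{\overline v}$. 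Since all three of these maximal cliques have size at least two and their edges incident to $v$ must all be covered, each of $C$, $K_v$, $K_{\overline v}$ must be the chosen maximal clique in exactly one tessellation, and moreover that tessellation must contain the whole of it (the part through $v$ together with the rest, since every other edge of $K_v$, being incident to a star-leaf, again lies in a unique maximal clique and can only be covered by a clique inside $K_v$; the same holds for $C$ whose edges are all internal). The same reasoning applies to the star edges: each star edge is a maximal clique of size two and must be covered, hence is non-exposed. Therefore the tessellation cover has no exposed maximal clique, and by Lemma~\ref{prop:exposedk} it follows that $T(G)=\chi(K(G))$, so $\chi(K(G))\le 3$; that is, $K(G)$ is $3$-colorable.
\qed
\end{proof}
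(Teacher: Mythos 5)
Your proof follows essentially the same route as the paper's: the paper's (much terser) argument is exactly your observation at a variable vertex $v\in C$ that the three maximal cliques $C$, $K_v$, $K_{\overline v}$ all pass through $v$ and therefore each must be wholly absorbed by a distinct tessellation, which turns the three tessellations into the three colors of $K(G)$; the converse is Theorem~\ref{prop:upper} in both cases. Your extra detour through Lemma~\ref{prop:exposedk} and the explicit diamond-freeness/unique-maximal-clique argument is a legitimate formalization of what the paper leaves implicit.

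One slip to fix: in your clique-graph verification you assert that $K_v$ and $K_{\overline v}$ ``share no vertex,'' but by Construction~\ref{fact:nae2} both literal cliques contain the vertex $v$ of $C$, so $K_v\cap K_{\overline v}=\{v\}$. This intersection is not optional --- it is precisely what produces the edge $v\overline v$ of the $P_2$ in $K(G)$ required by Construction~\ref{fact:nae1}; with your disjointness claim the clique graph of $G$ would fail to match $K(G)$. The error is self-correcting in the sense that your forward-direction argument already assumes all three of $C$, $K_v$, $K_{\overline v}$ pass through $v$ (which is what forces the three distinct tessellations), so the substance of the proof is unaffected, but the verification paragraph as written is internally inconsistent and should be repaired.
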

\begin{proof}
If $G$ is $3$-tessellable, we need one tessellation to cover the maximum clique whose size is the number of variables.
Therefore, the other two tessellations are used by the other two maximal cliques (which represent the literals of each variable). 
Moreover, the star of three leaves of each clause also needs to be covered by $3$ tessellations.
Note that these maximal cliques represent vertices in $K(G)$ and the tessellations represent their colors.
Therefore, $K(G)$ is $3$-colorable.

If $K(G)$ is $3$-colorable, then use these three colors as a guide to obtain a $3$-tessellation of $G$, where each color class of $K(G)$ represents that these maximal cliques of $G$ are covered by the tessellation which represents this color.~\qed
\end{proof}

Clearly, the graph $G$ obtained from Construction~\ref{fact:nae2} is diamond-free with diameter five.
Therefore, by Lemmas~\ref{lem:npb1} and ~\ref{lem:npb2}, the next theorem follows.

\begin{theo}\label{prop:nae3sat}
\textsc{$3$-tessellability} of diamond-free graphs with diameter five is $\mathcal{NP}$-complete.
\end{theo}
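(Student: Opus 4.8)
Looking at Theorem~\ref{prop:nae3sat}, the statement follows almost immediately from the machinery already assembled, so the proof proposal is largely a matter of assembling the pieces and verifying the side conditions on the constructed graph.

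\medskip

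\textbf{Proof plan for Theorem~\ref{prop:nae3sat}.}
The plan is to establish membership in $\mathcal{NP}$ and then give a polynomial-time reduction from \textsc{NAE $3$-SAT}, which is $\mathcal{NP}$-complete~\cite{ArGarey}. First I would note that \textsc{$3$-tessellability} is in $\mathcal{NP}$ by the earlier lemma, and this is inherited by the restriction to diamond-free graphs with diameter five, so only $\mathcal{NP}$-hardness remains. Given an instance $I$ of \textsc{NAE $3$-SAT}, I would first build the auxiliary graph $K(G)$ via Construction~\ref{fact:nae1}; this is clearly polynomial in the size of $I$, since each variable contributes a $P_2$ and each clause a triangle with three connecting edges, plus the single universal vertex $u$. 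Then I would apply Construction~\ref{fact:nae2} to $K(G)$ to obtain $G$, which is again polynomial-size: one star per clause, one clique $C$ of size equal to the number of variables, and two literal-cliques per variable built on the existing star-leaves and the vertices of $C$. The whole transformation $I \mapsto G$ is therefore computable in polynomial time.

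\medskip

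Next I would verify the two structural claims about $G$ that pin down the graph class: that $G$ is diamond-free and that it has diameter exactly five. For diamond-freeness, I would check case by case that no two maximal cliques of $G$ share more than one vertex — the maximal cliques of $G$ are the clause-stars (size-two cliques, the star edges), the big clique $C$, and the literal-cliques, and by the design of Construction~\ref{fact:nae2} any two of these meet in at most a single vertex (a star-leaf is shared between exactly one literal-clique and its star; a vertex $v \in C$ lies in $C$ and in its two literal-cliques, but these literal-cliques meet $C$ only at $v$ and meet each other only at $v$). Hence no induced $K_4 \setminus \{e\}$ can occur. For the diameter, I would exhibit a pair of vertices at distance five (for instance a pendant star-leaf of one clause to a star-leaf of an unrelated clause, routing through a literal-clique, through $C$, and out the other side) and argue every pair is within distance five by the same routing principle, using that $C$ acts as a hub connecting all literal-cliques.

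\medskip

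Finally I would chain the two equivalences. By Lemma~\ref{lem:npb1}, $K(G)$ is $3$-colorable if and only if $I$ is satisfiable; by Lemma~\ref{lem:npb2}, $G$ is $3$-tessellable if and only if $K(G)$ is $3$-colorable. Composing, $G$ is $3$-tessellable if and only if $I$ is a \textsc{yes}-instance of \textsc{NAE $3$-SAT}. Together with membership in $\mathcal{NP}$ and the fact that $G$ lies in the class of diamond-free graphs with diameter five, this shows \textsc{$3$-tessellability} restricted to that class is $\mathcal{NP}$-complete.

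\medskip

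\textbf{Where the difficulty lies.} The genuinely delicate content is entirely inside Lemmas~\ref{lem:npb1} and~\ref{lem:npb2}, which are already proved in the excerpt; the only real verification left for the theorem itself is the structural bookkeeping — confirming diamond-freeness and, in particular, that the diameter is \emph{exactly} five rather than four or six. That diameter computation is the main obstacle, since it requires checking both an upper bound (a short path between every pair of vertices, exploiting the hub clique $C$) and that the bound is attained (a pair of pendant vertices in distinct clause-gadgets whose every connecting path has length at least five). Everything else is routine.
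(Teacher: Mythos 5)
Your proposal matches the paper's argument exactly: the paper also proves this theorem by composing Lemmas~\ref{lem:npb1} and~\ref{lem:npb2} over the graph produced by Constructions~\ref{fact:nae1} and~\ref{fact:nae2}, asserting that the result is diamond-free with diameter five (a fact the paper states as ``clearly'' and you verify in slightly more detail). The approach and all essential steps are the same.
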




\subsection{{\sc$2$-tessellability}}\label{sec:poly}

Portugal~\cite{Por16b} showed that a graph $G$ is $2$-tessellable if and only if $K(G)$ is a bipartite graph.
Moreover, Peterson~\cite{Pet03} showed that $K(G)$ is bipartite if and only if $G$ is the line graph of a bipartite multigraph. 
Hence, determine if $G$ is $2$-tessellable is equivalent to verifying if $G$ is the line graph of a bipartite multigraph.

Protti and Szwarcfiter~\cite{ProSzw02} showed an $O(n^2m)$ time algorithm to decide if the clique graph of a given graph is bipartite. 
Moreover, Peterson~\cite{Pet03} showed an $O(n^3)$ time algorithm to decide if $G$ is the line graph of a bipartite multigraph.

The key idea of Peterson's algorithm is to group true twin vertices of a same clique of $G$, which represent multiedges in the bipartite multigraph $H$, where $G = L(H)$.
Then, it removes all those true twin vertices in each group but one, and the resulting graph is a line graph of a bipartite simple graph if and only if $K(G)$ is a bipartite graph.
To verify if a graph is a line graph of a bipartite graph, the Roussopoulos' linear-time algorithm is used~\cite{ArRou}. 

We improve Peterson's algorithm~\cite{Pet03}, by showing a faster way to remove true twin vertices 
belonging to a clique of a graph using its modular decomposition.
Throughout this section, we use notations of modules of a graph given in~\cite{ArSpin}.

Let $\mathcal{F}$ be the family of bipartite multigraphs obtained by adding multiple edges to $C_4$, $S_n$ or $P_4$. 
In order to make a modular decomposition of a graph $G$, we only consider graphs $G$ which are not a line graph of a graph in $\mathcal{F}$. If $G$ is a line graph of a graph in $\mathcal{F}$, we can consider this case separately, and easily achieve linear time.
Note that there are bipartite multigraphs with a same line graph. 
Therefore, we only consider the ones which maximize the number of multiple edges. 
Moreover, we only consider connected graphs, since the tessellation cover number of a disconnected graph is the maximum among the parameter on its connected components.

\begin{lemma} Let $H$ be a bipartite multigraph not in $\mathcal{F}$ and $L(H)$ be its line graph. 
Two edges $e_1$ and $e_2$ with same extremes in $H$ represent vertices in a same maximal strong module of $L(H)$ with size less than $|V(L(H))|$.
\label{lem:multbip1} 
\end{lemma}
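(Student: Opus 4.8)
I want to show that two parallel edges $e_1,e_2$ of $H$ (same pair of endpoints) land in a common maximal strong module of $L(H)$ that is a proper subset of $V(L(H))$. The natural candidate for this module is the set of all edges of $H$ incident to $\{x,y\}$, where $x,y$ are the common endpoints of $e_1$ and $e_2$ — equivalently, the vertices of $L(H)$ corresponding to the edge-star around $x$ together with the edge-star around $y$. Call this set $M$. First I would verify that $M$ is a module of $L(H)$: take any vertex $f \in V(L(H)) \setminus M$, i.e. an edge $f$ of $H$ not touching $x$ or $y$. I claim $f$ is adjacent in $L(H)$ to \emph{no} vertex of $M$: an edge incident to $\{x,y\}$ and an edge incident to neither $x$ nor $y$ can share an endpoint only at a vertex other than $x,y$; but then that shared endpoint, say $z$, together with $x$ (or $y$) would give a path, and — here is where bipartiteness and the exclusion of $\mathcal F$ enter — I must rule this out. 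Actually sharing is possible in general, so the correct statement is that $f$ sees the \emph{same} subset of $M$... no: an outside edge $f=\{z,w\}$ with $z \notin\{x,y\}$, $w\notin\{x,y\}$ is adjacent to an edge $g$ incident to $x$ iff $g=\{x,z\}$ or $g=\{x,w\}$, which depends on $g$, not uniformly. So $M$ as defined is generally not a module.

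**Correct module.** The right candidate is smaller: let $M$ be the set of \emph{all parallel copies of the edge $\{x,y\}$}, i.e. all edges of $H$ with endpoint set exactly $\{x,y\}$. These are pairwise true twins in $L(H)$ (they are pairwise adjacent, and each is adjacent to exactly the edges incident to $x$ or to $y$), so $M$ is a strong module, and $e_1,e_2 \in M$. It remains to show that the maximal strong module containing $M$ is still a proper subset of $V(L(H))$. Suppose not: then the modular decomposition tree has $V(L(H))$ itself as the only strong module containing $e_1$ above the trivial ones, which (since $L(H)$ is connected, being the line graph of a connected $H$) forces the root of the decomposition to be a series or parallel node with $e_1$ in one of its parts — and if that part is a singleton or small, $L(H)$ would be a join/union exhibiting very restricted structure. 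The key step is to show that a connected line graph $L(H)$ of a bipartite multigraph $H \notin \mathcal F$ cannot be a nontrivial join, and that its root parallel/series structure is incompatible with $V(L(H))$ being the smallest nontrivial strong module around a true-twin pair unless $H \in \mathcal F$; concretely, $L(H)=G_1 \ast G_2$ (join) forces $H$ to have all edges mutually intersecting, which for bipartite $H$ means $H$ is a star $S_n$ or is contained in a triangle-like configuration — precisely the degenerate cases collected in $\mathcal F$ (multi-$C_4$, multi-$S_n$, multi-$P_4$). I would enumerate: if every two edges of $H$ share a vertex, then either all edges share one common vertex ($H \subseteq S_n$, excluded) or $H$ has three edges pairwise meeting without a common vertex, forming a triangle of endpoints — impossible in a bipartite graph — hence $H \subseteq S_n$; this handles the join (series) case, so the root of the modular decomposition is not series.

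**Finishing.** With the series case excluded, the root is either parallel or prime. If parallel, $L(H)$ is disconnected, contradicting connectedness of $H$. So the root is prime, and therefore every maximal strong module is a \emph{proper} subset of $V(L(H))$; in particular the maximal strong module containing the true-twin class $M$ (hence containing $e_1,e_2$) has size less than $|V(L(H))|$. The only remaining subtlety is the genuinely small cases, where $L(H)$ might be prime but tiny or where $H$ is so small that $M$ already equals $V(L(H))$: these are exactly $H$ built from $C_4$, $P_4$, or $S_n$ by adding multiplicities, which is why $\mathcal F$ was carved out in advance, so I invoke that hypothesis to dispose of them. I expect the main obstacle to be the case analysis showing that "all edges pairwise intersecting" in a bipartite multigraph forces the star, and cleanly matching the residual degenerate configurations to the three families defining $\mathcal F$ — i.e. being careful that no prime-root example with $M = V(L(H))$ slips through outside $\mathcal F$. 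Everything else (that parallel copies of an edge are true twins, that true twins form a strong module, that a prime root implies all maximal strong modules are proper) is standard modular-decomposition bookkeeping using the notation of~\cite{ArSpin}.
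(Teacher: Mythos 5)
Your reduction of the problem to locating $e_1,e_2$ within the children of the root of the modular decomposition tree is reasonable, and your handling of the parallel case (connectedness of $L(H)$) and of the prime case (a two-element module cannot straddle two children of a prime node) is fine. The proof breaks, however, at the one step that carries all the weight: the claim that a series root, i.e.\ a join $L(H)=G_1\ast G_2$, ``forces $H$ to have all edges mutually intersecting.'' A join only forces every edge represented in $G_1$ to meet every edge represented in $G_2$; two edges represented inside the same $G_i$ need not meet at all. What a series root with $\{e_1\}$ and $\{e_2\}$ as separate children actually requires is only that $e_1$ and $e_2$ be universal vertices of $L(H)$, i.e.\ that every edge of $H$ meet $\{x,y\}$ --- a much weaker condition than pairwise intersection of all edges, and one that does not force $H$ to be a star.

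Concretely, take $H$ with vertices $a_1,a_2,x,y,b_1$, edges $a_1x$, $a_2x$, $yb_1$, and the edge $xy$ doubled into $e_1,e_2$. This $H$ is connected, bipartite, and its underlying simple graph is a double star, so $H\notin\mathcal{F}$ as defined. Yet every edge meets $\{x,y\}$, so $e_1,e_2$ are universal in $L(H)$; the complement of $L(H)$ has $e_1$ and $e_2$ as isolated vertices, the root of the decomposition is series, and its children are $\{e_1\}$, $\{e_2\}$ and $\{a_1x,a_2x,yb_1\}$. Thus the series case is not excluded by your argument (and $a_1x$, $yb_1$ are disjoint edges, refuting the ``all edges pairwise intersect'' deduction), so the strategy ``show the root is prime'' cannot be completed as written: the series case must be analyzed directly, and your proposal disposes of it with a false implication. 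A secondary, related issue is your assertion that the class of parallel copies of $\{x,y\}$ is automatically a \emph{strong} module: it is a module (true twins), but in exactly these join situations it overlaps other modules (unions of co-components), so strongness cannot be taken for granted and would have to be argued as part of the same missing case analysis.
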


\begin{lemma} Let $H$ be a bipartite multigraph not in $\mathcal{F}$ and $L(H)$ be its line graph. 
Any maximal strong module in a modular decomposition of $L(H)$ with size less than $|V(L(H))|$ induces a clique in $L(H)$.
\label{lem:multbip2} 
\end{lemma}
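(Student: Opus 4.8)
The plan is to leverage the structural characterization of line graphs of bipartite multigraphs together with the fact that $H\notin\mathcal{F}$. First I would recall that $L(H)$ is a line graph, so by Whitney-type reasoning its maximal cliques correspond (except for triangles that could arise from a $K_3$ in $H$, impossible here since $H$ is bipartite) exactly to the stars of $H$: for each vertex $x$ of $H$, the set of edges of $H$ incident to $x$ forms a clique $Q_x$ in $L(H)$, and every maximal clique of $L(H)$ is of this form. Thus in $L(H)$ two adjacent vertices $e_1,e_2$ (edges of $H$ sharing an endpoint $x$) have the property that the common neighborhood structure is controlled by which endpoint(s) they share.

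The key step is to show that any maximal strong module $M$ of $L(H)$ with $|M|<|V(L(H))|$ is contained in some single star $Q_x$; once this is established, $M$ induces a clique and we are done. To see the containment, take two vertices $e_1=xy$ and $e_2=x'y'$ of $M$. If $e_1$ and $e_2$ are non-adjacent in $L(H)$, i.e. $\{x,y\}\cap\{x',y'\}=\emptyset$, then because $M$ is a module every vertex outside $M$ must be adjacent to both or to neither; but in a connected $H$ not of the exceptional form one can exhibit an edge $f$ incident to exactly one of $\{x,y,x',y'\}$ unless $H$ is tiny — and the graphs where no such $f$ exists are precisely (the line graphs of) $C_4$, $P_4$, and stars $S_n$, which is exactly why the hypothesis $H\notin\mathcal{F}$ is invoked. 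Hence in the non-exceptional case every two vertices of $M$ are adjacent in $L(H)$, so $M$ is a clique in $L(H)$, which is even stronger than what we need and immediately shows $M$ lies in a common star $Q_x$. I would organize this as: (i) reduce to connected $H$ (already done in the running text); (ii) assume for contradiction $M$ contains two non-adjacent vertices; (iii) use Lemma~\ref{lem:multbip1} style arguments plus a short case analysis on the shared endpoints to produce a distinguishing edge $f\notin M$ with exactly one neighbor among the four endpoints, contradicting that $M$ is a module; (iv) conclude $M$ induces a clique.

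The main obstacle I anticipate is the case analysis in step (iii): I must carefully enumerate the ways two edges of $H$ can fail to be adjacent and argue, in each, that a distinguishing third edge exists — and crucially pin down that the \emph{only} obstructions are $C_4$, $P_4$, and $S_n$ (with their multi-edge thickenings), matching the definition of $\mathcal{F}$. This requires checking that adding multiple edges to these base graphs does not create new obstructions (since parallel edges become true twins and sit in the same module) and that any bipartite multigraph whose simple underlying graph is none of $C_4,P_4,S_n$ does admit the needed distinguishing edge. I would also double-check the boundary between "maximal strong module of size $<|V(L(H))|$" and the trivial module $V(L(H))$ itself, since the size restriction is what rules out the degenerate situation where $M$ is everything (which can happen precisely for the $\mathcal{F}$ graphs). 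Modular-decomposition bookkeeping — that a maximal proper strong module is well defined and that the parallel-edge classes refine it — follows from the standard theory in~\cite{ArSpin} and from Lemma~\ref{lem:multbip1}.
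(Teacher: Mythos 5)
Your step (iii) is where the argument breaks, and the break is fatal rather than cosmetic. You need the distinguishing edge $f$ to lie \emph{outside} the module $M$; an edge adjacent to exactly one of $e_1,e_2$ is useless if it belongs to $M$. You assert that the only bipartite multigraphs in which no such external distinguishing edge exists are (thickenings of) $C_4$, $S_n$ and $P_4$, i.e.\ exactly $\mathcal{F}$, but this is false. Take $H$ to be the five-vertex tree with edges $c_1a_1$, $c_1a_2$, $c_1c_2$, $c_2b_1$ (a ``chair''); $H$ is connected, bipartite, and not in $\mathcal{F}$. Its line graph is the paw: the triangle on $\{c_1a_1,c_1a_2,c_1c_2\}$ with the pendant vertex $c_2b_1$ attached to $c_1c_2$. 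The set $M=\{c_1a_1,\,c_1a_2,\,c_2b_1\}$ is a module, since the unique vertex outside it, $c_1c_2$, is adjacent to all three; it is strong (the modules of the paw form a laminar family) and it is a maximal strong module of size $3<4=|V(L(H))|$. It contains the non-adjacent pair $c_1a_1,\,c_2b_1$, and the only edge of $H$ that distinguishes them, namely $c_1a_2$, lies inside $M$. So your contradiction never materializes, and indeed $M$ induces $K_2\cup K_1$, not a clique. (The double star with two leaves on each side gives the same phenomenon one size up: the line graph is the bowtie and the four non-central vertices form a maximal strong module inducing $2K_2$.)

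This means the route ``every child of the root of the modular decomposition tree is a clique'' cannot succeed as stated: that property simply does not hold for all bipartite multigraphs outside $\mathcal{F}$, so no amount of case analysis on shared endpoints will close the gap. To salvage the proof you must first pin down which strong modules the lemma is actually about. The role these modules play in Lemma~\ref{lem:multbip1} and in the algorithm of Theorem~\ref{teo:2tess} is to isolate the true-twin classes of $L(H)$ (the parallel-edge classes of the multi-edge-maximizing root graph), and true-twin classes correspond to the \emph{series} nodes at the bottom of the decomposition tree, not to arbitrary children of the root. A correct argument should therefore restrict attention to the strong modules that contain a pair of true twins (equivalently, that arise from parallel edges as in Lemma~\ref{lem:multbip1}) and show that \emph{those} induce cliques; your proposal never uses the hypothesis that $M$ contains such a pair, which is precisely the missing ingredient that the counterexample exploits.
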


\begin{theo} \label{teo:2tess}
\textsc{$2$-tessellability} can be solved in linear time.
\end{theo}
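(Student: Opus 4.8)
The plan is to combine the structural equivalences already assembled with the modular-decomposition lemmas to replace the slowest part of Peterson's $O(n^3)$ algorithm by a linear-time routine. Recall the chain of equivalences: by Portugal and Peterson, $G$ is $2$-tessellable $\iff$ $K(G)$ is bipartite $\iff$ $G=L(H)$ for some bipartite multigraph $H$. Peterson reduces the multigraph case to the simple-graph case by detecting and collapsing groups of true twins lying in a common clique of $G$ (these correspond to parallel edges of $H$), and then applies Roussopoulos' linear-time line-graph-of-bipartite recognition. So the only obstacle to overall linear time is computing that collapse quickly.

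First I would dispatch the finitely many exceptional cases: if $G$ is the line graph of a member of $\mathcal{F}$ (line graphs of multigraphs built on $C_4$, $S_n$, or $P_4$), handle it directly in linear time by inspection, as the paper stipulates; likewise restrict to connected $G$, since $T$ of a disconnected graph is the max over components, computable componentwise in linear time. Then, for the remaining connected $G$, I would compute a modular decomposition of $G$ in linear time using the known algorithms referenced via \cite{ArSpin}. By Lemma~\ref{lem:multbip1}, if $G=L(H)$ with $H$ a bipartite multigraph not in $\mathcal{F}$, then every pair of parallel edges of $H$ lands inside a common maximal strong module of size less than $|V(G)|$; by Lemma~\ref{lem:multbip2}, each such maximal strong module induces a clique in $G$. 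Hence the maximal proper strong modules of $G$ are exactly the blocks of true twins we must collapse: within each such module, keep one representative vertex and delete the rest. This collapsing step, given the modular decomposition tree, is linear: it is a single pass over the tree and the adjacency lists.

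After the collapse I would run Roussopoulos' linear-time algorithm \cite{ArRou} on the reduced graph $G'$ to test whether $G'$ is the line graph of a bipartite simple graph; if it is not, answer NO, and if it is, answer YES — correctness here is exactly Peterson's reduction, now justified for multigraphs not in $\mathcal{F}$ by Lemmas~\ref{lem:multbip1} and~\ref{lem:multbip2}, together with the already-dispatched $\mathcal{F}$ cases. Each phase — exceptional-case screening, modular decomposition, module-based twin collapse, and Roussopoulos — is $O(n+m)$, so the total is linear. The main obstacle, and the one to treat most carefully in writing, is verifying that the maximal proper strong modules coincide precisely with the twin-groups Peterson collapses: one direction is Lemma~\ref{lem:multbip2} (the modules are cliques, so their vertices are pairwise true twins within $G$), and the other is Lemma~\ref{lem:multbip1} (no twin-pair straddles two distinct maximal proper strong modules), but one should also confirm that collapsing a maximal strong module never accidentally identifies vertices that are \emph{not} twins in the original $G$ and never changes the answer to the line-graph-of-bipartite test — which is where the restriction ``maximize the number of multiple edges'' in the choice of $H$ and the exclusion of $\mathcal{F}$ do the real work.
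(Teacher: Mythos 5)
Your proposal is correct and follows essentially the same route as the paper's proof: linear-time modular decomposition via McConnell--Spinrad, using Lemmas~\ref{lem:multbip1} and~\ref{lem:multbip2} to identify the true-twin groups with the maximal proper strong modules (rejecting if a module fails to induce a clique), collapsing each module to a single representative, and finishing with Roussopoulos' algorithm plus a bipartiteness check on the root graph. The concerns you flag at the end (modules must be verified to be cliques before collapsing, and the $\mathcal{F}$ exceptions handled separately) are exactly the checks the paper's proof performs.
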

\begin{proof}
First, we use McConnell and Spinrad's linear-time algorithm to obtain a modular decomposition of $G$.
By Lemmas~\ref{lem:multbip1}~and~\ref{lem:multbip2}, we know that the strong modules in any modular decomposition of a line graph of a bipartite multigraph $H\notin\mathcal{F}$ induce cliques. 
Moreover, the vertices of these cliques in $L(H)$ are related to edges of $H$ with same extremes.

Then, we check if each of at most $O(|V(G)|)$ strong modules induces cliques in $G$, which can be done in $O(|V(G)|+|E(G)|)$. Otherwise, we know that $G$ is not a line graph of a bipartite multigraph.
Next, we remove all true twins vertices in each strong modules but one, obtaining the graph $G'$.
This step is related to remove all multiedges of $H$ which share same extremes.
Therefore, the graph $G$ is a line graph of a bipartite multigraph $H$ if the resulting graph $G'$ is a line graph of a bipartite simple graph $H'$.

Finally, we use Roussopoulos' linear-time algorithm to determine if $G'$ is a line graph, and if so, obtain its root graph $H'$ for which $G'$ is the line graph. 
Note that verifying to if $H'$ is a bipartite graph can be done in linear time by using a breadth-first search (because the size of the root graph of $G'$ is asymptotically bounded by the size of~$G'$).
\qed
\end{proof}

\section{Concluding remarks and discussion}
\label{sec:conc}


We investigate extremal results on graph tessellation covers, which are fundamental for the development of quantum walks in the staggered model. These results help to understand the complexity of the unitary operators necessary to express the evolution of staggered quantum~walks.
We establish tight upper bounds for the tessellation cover number of a graph $G$ related to $\chi'(G)$ and $\chi(K(G))$ and 
we determine graph classes which reach these extremal bounds.
This study provides tools to distinguish several classes for which the \textsc{$t$-tessellability} problem is efficiently tractable (bipartite, diamond-free $K$-perfect and $\{$triangle, proper major$\}$-free graphs) from others the problem is $\mathcal{NP}$-complete (planar, triangle-free, chordal $(2,1)$-graphs, $(1,2)$-graphs, and diamond-free graphs with diameter five). 
We also establish the \textsc{$t$-tessellability} $\mathcal{NP}$-completeness for any fixed $t\geq 3$. 
Moreover, we improve to linear time the known algorithms to recognize $2$-tessellable graphs~\cite{Por16b}, line graphs of bipartite multigraphs~\cite{ProSzw02}, and graphs~$G$ such that $K(G)$ is bipartite~\cite{Pet03}.
As a consequence, we establish an interesting complexity dichotomy between {\sc edge-colorability} and $t$-{\sc tessellability}: {\sc edge-colorability} of planar graphs with $\Delta\geq 8$ is in $\mathcal{P}$~\cite{sanders2001planar}, while $t$-{\sc tessellability} is $\mathcal{NP}$-complete 
(Theorem~\ref{prop:npplanar} replacing each of the four non external triangles that shares two vertices of external triangles by $K_4$'s) 
and; {\sc edge-colorability} of line graph of bipartite graphs is $\mathcal{NP}$-complete~\cite{cai1991np}, while $t$-{\sc tessellability} is in $\mathcal{P}$ (Theorem~\ref{teo:2tess}). We have not managed yet to establish the same dichotomy to $k$-{\sc colorability of clique graph} and $t$-{\sc tessellability}. 
We are currently trying to establish the hardness of the problem for $(1,1)$-graphs (split graphs). However, the computational complexity for $(0, 2)$-graphs (complement of bipartite graph) is still widely open.

A question that naturally arises is whether every graph has a minimum tessellation cover such that every tessellation contains a maximal clique. 
Although we believe in most cases the answer is true, we have computationally found a surprising example of a graph, which is depicted in Fig.~\ref{fig:maximal}, with all minimum tessellation covers requiring a tessellation without maximal cliques. 
We are currently trying to establish an infinite family of graphs for which this property does not hold and to establish other graph classes where it holds.%
The computational verification was performed through a reduction from \textsc{$t$-tessellability} problem to \textsc{set-covering} problem (the description of \textsc{set-covering} is available at~\cite{ArGarey}), where the finite set is the edge set of the input graph, and the
family of subsets consists of the edge subsets corresponding to all possible tessellations of the input graph. 
Another interesting issue is that two minimum tessellation covers may present different quantum walk dynamics. 
Therefore, we intend to study the different tessellation covers using the same number of tessellations, which may result in simpler quantum walks and more efficient quantum algorithms. More recently, a general partition-based framework for quantum walks has been proposed~\cite{konno2017partition}.

\begin{figure}[!h]
\vspace{-0.5cm}
\centering
  \includegraphics[width=12.5cm]{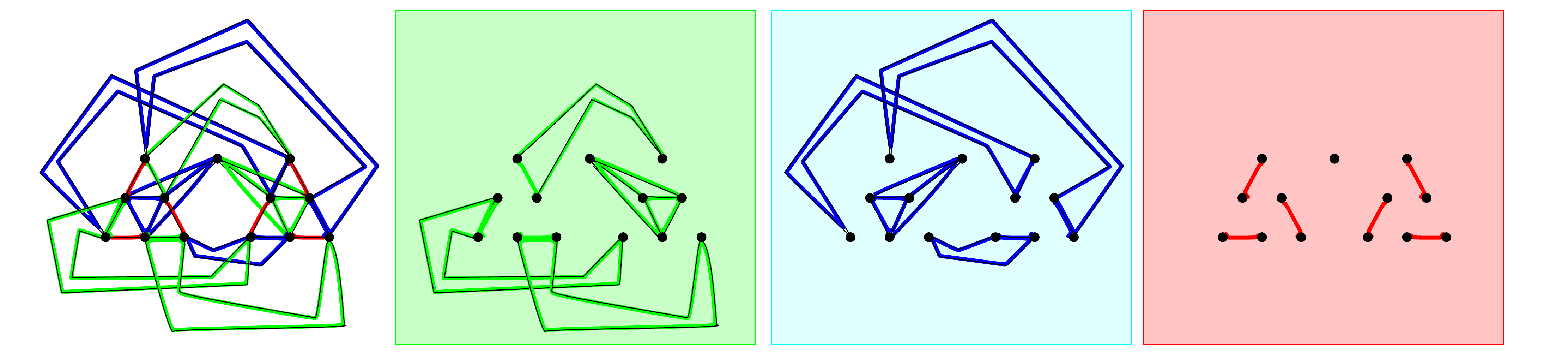}
  \caption{$3$-tessellable graph. Rightmost tessellation does not contain a maximal clique.}
  \label{fig:maximal}
\end{figure}



\end{document}